\newcommand{\ie}{\textit{i.e.}}
\renewcommand{\epsilon}{\varepsilon}
\newcommand\set[1]{\ensuremath{\{#1\}}}
\newcommand{\tup}[1]{{\langle #1 \rangle}}
\newcommand\vars{\textit{vars}}
\newcommand\key{\textit{key}}
\newcommand\certain{\textup{\textsc{certain}}}
\newcommand\conp{\textup{\sc coNP}\xspace}
\newcommand\ptime{\textup{\sc PTime}\xspace}
\newcommand\fo{\textup{FO}\xspace}
\newcommand\cert{\textup{Cert}}
\newcommand\Cqk{\cert_k(q)\xspace}
\newcommand\matching{\textup{\textsc{matching}}(q)\xspace}
\newcommand\bu{\bar u}
\newcommand\bv{\bar v}
\newcommand\sjf{\ensuremath{\textit{sjf}}}
\newcommand{\tripath}{\textsc{tripath}\xspace}
\newcommand\g{g}
\newcommand\bg{\bar g}
\newcommand{\bkey}{\overline{\key}}
\newcommand\rkey{\text{r-key}}
\newcommand{\adom}{\textit{adom}}
\newcommand{\twowaydet}{2way-determined\xspace}
\newcommand{\twowaydetcy}{2way-determinacy\xspace}
\newcommand{\clique}{\textit{clique}}
\begin{document}


\title{A Dichotomy in the Complexity of Consistent Query Answering for Two Atom
  Queries With Self-Join}


\author{Anantha Padmanabha}
\affiliation{%
  \institution{IIT Dharwad}
  \city{Dharwad}
  \country{India}
  }  
\email{ananthap@iitdh.ac.in}

\author{Luc Segoufin}
\affiliation{%
  \institution{INRIA, ENS-Paris, PSL University}
  \city{Paris}
  \country{France}
  }  
\email{luc.segoufin@inria.fr}

\author{Cristina Sirangelo}
\affiliation{%
  \institution{Université Paris Cité, CNRS, Inria, IRIF}
  \city{Paris}
  \country{France}
}
\email{cristina@irif.fr}

\renewcommand{\shortauthors}{Anantha Padmanabha, Luc Segoufin, \& Cristina Sirangelo}

\begin{abstract}
  We consider the dichotomy conjecture for consistent query answering under
  primary key constraints. It states that, for every fixed Boolean
  conjunctive query $q$, testing whether $q$ is certain (i.e.  whether it evaluates
  to true over all repairs of a given inconsistent database) is either
  \ptime or \conp-complete. This conjecture has been verified for
  self-join-free and path queries. We show that it also holds for
  queries with two atoms.
\end{abstract}

\begin{CCSXML}
<ccs2012>
<concept>
<concept_id>10003752.10010070.10010111.10010113</concept_id>
<concept_desc>Theory of computation~Database query languages (principles)</concept_desc>
<concept_significance>500</concept_significance>
</concept>
</ccs2012>
\end{CCSXML}

\ccsdesc[500]{Theory of computation~Database query languages (principles)}

\keywords{consistent query answering, conjunctive queries, primary keys, self-joins, dichotomy}


\maketitle

\section{Introduction}\label{sec-intro}
A relational database often comes with integrity constraints. With the attempts to harness  data from complex environments like big data, social media etc., where the database is built by programs that go over a large data dump, more often than not we end up with a database that violates one or more of the integrity constraints. This is because in such heterogeneous sources, the data is often incomplete or ambiguous. 
Inconsistencies in databases also occur while integrating data from multiple sources.

To deal with this problem, one approach is to {\it clean} the database when it is being built and/or modified.
 However,  this task is not easy as it is inherently non-deterministic: there may be many equally good candidate tuples to add/delete/update to make the  database consistent. In the absence of additional information (which is often the case), these decisions are arbitrary.
 
There is another way to cope with this problem: we allow the database to be inconsistent and the problem is handled during query evaluation. In this approach, we retain the inconsistent database as it is and we rely on the notion of database {\em repair}. Intuitively repairing a database corresponds to obtaining a consistent database by making minimal changes to the inconsistent one. A conservative approach to evaluate a query is to evaluate it over every possible repair and retain only the \emph{certain answers}, \ie~ the query answers which are true on all repairs. This approach is called {\em consistent query answering}~\cite{DBLP:conf/pods/ArenasBC99, DBLP:conf/pods/Bertossi19}.

This approach for handling inconsistency has an advantage of not loosing any information and avoids making arbitrary choices to make the database consistent. However, since we need to evaluate the query on all the repairs, this will affect the complexity of query evaluation. The impact will of course depend on the type of integrity constraints and on the definition of a repair; but most often there could be exponentially many ways to repair a database.

 For a fixed boolean conjunctive query $q$, the decision version of the certain query answering problem is the  following: given an inconsistent database $D$ as input, does $q$ evaluate to true on all the repairs of $D$?

When we consider primary key constraints, the natural notion of a repair is to pick exactly one tuple for every primary key. Thus, every repair is a subset of the given inconsistent database. But there could be exponentially many repairs for the given database.
 For a fixed boolean conjunctive query, in the presence of primary key constraints, checking for certain answers is
 in \conp. This is because, to check that the query is not certainly true, it is enough to guess a subset of the database which forms a repair and verify that it makes the query false. However, there are queries for which the problem can be solved in \ptime and for some queries, the problem is \conp-hard.

The main conjecture  for consistent query answering in the presence of primary keys
is that there are no intermediate cases: for a fixed boolean conjunctive query $q$, the consistent query answering problem for $q$ is either
solvable in \ptime or \conp-complete.

The conjecture has been proved for self-join-free Boolean conjunctive
queries~\cite{DBLP:journals/tods/KoutrisW17} and path queries \cite{DBLP:conf/pods/KoutrisOW21}.
However, the conjecture remains open
for arbitrary conjunctive queries, in particular for queries having self-joins (\ie~ having at least two different atoms using the same relation symbol).

In this paper we show that the conjecture holds for conjunctive queries with
two atoms. As the case of self-join-free queries has already been solved \cite{DBLP:journals/ipl/KolaitisP12},
we consider only queries consisting of two atoms over the same relation symbol.

Towards proving the conjecture we start by introducing the notion of
\twowaydetcy and we distinguish two separate cases.  The first case proves the dichotomy for
all the two-atom queries with self-joins that are not \twowaydet; these are identified via
syntactic conditions. For these queries \conp-hardness is obtained through a
reduction from the self-join-free case with two
atoms~\cite{DBLP:journals/ipl/KolaitisP12}. On the other hand tractable cases are
obtained via the greedy fixpoint algorithm developed in~\cite{ICDTJournal} for
self-join free queries.

For queries that are \twowaydet we use a semantic characterization. To this end
we introduce the notion of \tripath, which is a database of a special form, and further
classify \tripath into triangle-\tripath and fork-\tripath.  For \twowaydet
queries the existence of a fork-\tripath establishes the desired complexity
dichotomy. In particular we prove that the certain answering problem is
\conp-hard for queries which admit a fork-\tripath while it is in \ptime otherwise. In the
latter case the polynomial time algorithm is a combination of the greedy
fixpoint algorithm of \cite{ICDTJournal} and a bipartite matching-based algorithm
(also introduced in \cite{ICDTJournal}).

Our second result further refines the polynomial time case by identifying
classes of queries for which the greedy fixpoint algorithm correctly computes certain answers (assuming
$\ptime\neq\conp$). The frontier turns out to be the presence of a
triangle-\tripath.  Indeed we show that for \twowaydet queries which do not admit a
\tripath at all (neither fork-\tripath, nor triangle-\tripath) the greedy
fixpoint algorithm alone is correct. Furthermore we also prove that this
algorithm fails to compute the certain answer to \twowaydet queries admitting a
triangle-\tripath.

\paragraph*{Related work}
The case of \emph{self-join-free} conjunctive queries with two atoms was
considered by Kolaitis and Pema~\cite{DBLP:journals/ipl/KolaitisP12}. Proving
the dichotomy in the presence of self-joins requires a completely different
technique. However we use the \conp-complete characterization
of~\cite{DBLP:journals/ipl/KolaitisP12} for solving a special case in our
analysis.

We rely heavily on the polynomial time algorithms developed
in~\cite{ICDTJournal}. In this work a simple greedy fixpoint algorithm, referred to as $\Cqk$, was introduced and was shown to solve all the \ptime 
cases of self-join-free conjunctive queries (and also path queries). Moreover \cite{ICDTJournal}  shows that some two-atom
queries with self-join which are in \ptime cannot be solved by $\Cqk$, but a different algorithm based on bipartite matching will work for them. We essentially show that a combination of these two
algorithms solve all the polynomial time cases of two-atom conjunctive
queries. In essence we show that if the combination of the two algorithms does not work then the query is \conp-hard.

We do not rely on the notion of attack graph or any other tools used for
self-join queries, developed by Koutris and
Wijsen~\cite{DBLP:journals/tods/KoutrisW17}.


\section{Preliminaries}\label{sec-prelims}
We consider boolean conjunctive queries over relational databases. As our queries
will have only two atoms and because the self-join-free case is already solved,
we can assume that these two atoms refer to the same relational symbol. Therefore we consider relational schema with only one relational symbol, associated with a primary key constraint.

A relational schema consist of a relation symbol $R$ with signature
$[k,l]$, where $k\ge 1$ denotes the arity of $R$ and the first $l~ (\ge 0)$ positions form the primary key of $R$.

We assume an infinite domain of {\bf elements} and an infinite
set of variables. A {\bf term} is of the form $R(\bar t)$ where
$\bar t$ is a tuple of elements or variables of arity $k$. A term $R(\bar t)$
is called a{\bf fact} if $\bar t$ is a tuple of elements, and $R(\bar t)$ is
called an {\bf atom} if $\bar t$ is a tuple of variables. We use $a,b,c$ etc to
denote facts and $A,B,C$ etc to denote atoms. 

Given a term $R(\bar t)$ we let $R(\bar t)[i]$ denote the variable / element at
$i$-th position of $\bar t$. For a set of positions $I$ we let
$R(\bar t)[I] = \{ R(\bar t)[i]\mid i\in I\}$. Let $S$ be the set of all $k$
positions of $R$. If $a$ is a fact then we write
$\adom(a)$ for $a[S]$. Similarly if $A$ is an atom
then we write $\vars(A)$ for $A[S]$. We define
the {\bf key} of $R(\bar t)$ to be the tuple $\bkey(R(\bar t))$ consisting of
the first $l$ elements of $\bar t$ and let
$\key(R(\bar t)) = R(\bar t)[K]$, where $K$ is the set of the first $l$
(key) positions of $R$.  For instance, if $R$ has
signature $[5,3]$ and $A=R(xyx~yz)$, we have $\bkey(A) = (x,y,x)$,
$\key(A) = \{x,y\}$ and $\vars(A) = \{x,y,z\}$.  Two terms $R(\bar t_1)$ and
$R(\bar t_2)$ are key-equal  if $\bkey(R(\bar t_1)) = \bkey(R(\bar t_2))$ and we denote it by $R(\bar t_1) \sim R(\bar t_2)$.

  A {\bf database} is a \emph{finite} set of facts. We
say that a database $D$ is of size $n$ if there are $n$ facts in $D$.  A
database $D$ is {\bf consistent} if it does not contain two
distinct key-equal facts. A {\bf block} in $D$ is a maximal subset
of $D$ that contains key-equal facts. A {\bf repair} of $D$ is a $\subseteq$-maximal
consistent subset of $D$. Note that $D$ can be partitioned into disjoint blocks and every repair picks exactly one fact from every block. If $r \subseteq D$ is a repair and $a$ is a fact in $D$ then for any $a'\sim a$ we denote as $r[a\to a']$ the repair  obtained from $r$ by replacing  $a$ by $a'$.

A {\bf query} $q$ is given by two atoms $A$ and $B$ and it corresponds to
the Boolean conjunctive query $\exists \bar y~ A \land B$ where $\bar y$ is the
tuple of all the variables in $\vars(A)\cup \vars(B)$. Since every variable is
quantified, we ignore the quantification and write $q=AB$. For instance, if the
query is $q = \exists xyzu~ R(xyx~uz) \land R(yxu~zu)$ then we let
$A = R(xyx~uz)$, $B=R(yxu~zu)$ and write $q = AB$.

A database $D$ satisfies a query $q=AB$, denoted by $D \models q$ (sometimes
denoted by $D\models AB$), if there exists a mapping $\mu$ from $\vars(A)\cup \vars(B)$ to elements such that $\mu(A),\mu(B)\in D$.
 In this case
the pair $(\mu(A),\mu(B))$  of (not necessarily distinct)  facts of $D$ is called
a {\bf solution} to $q$ in $D$. We also say that the fact $\mu(A)$ matches $A$ and $\mu(B)$ matches $B$. Different mappings give different solutions. The set of solutions to $q$ in $D$ is
denoted by {\bf $q(D)$}.  We will also write
{\bf $D \models q(ab)$} to denote that 
$(a,b)$ is a solution to $q$ in $D$ via some $\mu$.
We also write $D\models q\set{ab}$ to denote
$D\models q(ab)$ or $D\models q(ba)$. If $D$ is clear from the context we
simply write $q\set{ab},~q(ba)$ etc.

A query $q$ is {\bf certain} for
a database $D$ if all repairs of $D$ satisfy $q$. For a fixed  query\footnote{Along  standard lines, we adopt the
\emph{data complexity} point of view, \ie~ the query is fixed and we measure
the complexity as a function on the number $n$ of facts in $D$.} 
$q$, we denote by {\bf $\certain(q)$} the problem of determining, given
a database $D$, whether $q$ is certain for $D$. We write $D\models\certain(q)$  or $D\in \certain(q)$ to denote that $q$ is certain for $D$. 
Clearly the problem is in \conp as one can guess a (polynomial sized) repair $r$ of $D$ and verify that $r$
does not satisfy $q$.

We aim at proving the following result:

\begin{theorem}\label{main-theorem}
  For every (2-atom) query $q$, the problem $\certain(q)$ is either in \ptime
  or \conp-complete.
\end{theorem}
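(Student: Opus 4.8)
The plan is to proceed by a case analysis on the structure of the query $q=AB$, organized around the notion of \twowaydetcy introduced above. The first dividing line is whether $q$ is \twowaydet or not. For queries that are \emph{not} \twowaydet I would read off the complexity directly from syntactic features of the two atoms: failing \twowaydetcy forces a certain asymmetry between $A$ and $B$, and I would exploit this asymmetry in two opposite ways according to the syntactic criterion. When the criterion predicts hardness, the idea is to embed an instance of the already-solved self-join-free two-atom case~\cite{DBLP:journals/ipl/KolaitisP12} into $q$, so that the known \conp-hardness transfers by reduction. When the criterion predicts tractability, I would instead argue that the greedy fixpoint algorithm $\Cqk$ of~\cite{ICDTJournal} already decides certainty correctly, placing $\certain(q)$ in \ptime.

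For the remaining case, where $q$ is \twowaydet, the syntactic picture is too coarse, and I would switch to a semantic characterization based on \tripath. Concretely, I would establish that a \twowaydet query $q$ admits a fork-\tripath if and only if $\certain(q)$ is \conp-hard, and that in the absence of any fork-\tripath the problem lies in \ptime. The hardness direction amounts to turning a fork-\tripath, which is a witness database of a prescribed shape, into a reduction gadget, using the fork shape to encode the combinatorial choice underlying a known \conp-hard problem. The tractability direction is algorithmic: I would first run the greedy fixpoint algorithm $\Cqk$ to resolve all forced blocks, then treat the residual instance---whose remaining solutions are constrained to a matching-like structure---with the bipartite-matching algorithm of~\cite{ICDTJournal}, and argue that this combination is both sound and complete for deciding certainty.

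Combining the two cases yields the dichotomy: every two-atom query is classified as \ptime (decided by $\Cqk$ alone, or by $\Cqk$ combined with matching) or as \conp-complete, with membership in \conp being the easy observation already recorded in the preliminaries and hardness coming from the appropriate reduction. The step I expect to be the main obstacle is the tractability argument in the \twowaydet, fork-free case. Proving that the fixpoint and matching phases together capture exactly the certain instances requires a precise structural understanding of how solutions to $q$ can be distributed across blocks once \twowaydetcy holds but no fork-\tripath is present; in particular one must show that the ambiguity surviving the fixpoint stage is genuinely bipartite and cannot conceal a \conp-hard pattern. This is precisely where the fork-\tripath versus triangle-\tripath distinction has to be shown to be the true complexity frontier.
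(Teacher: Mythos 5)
Your proposal follows essentially the same route as the paper: the same syntactic split around \twowaydetcy (hardness transferred by reduction from the self-join-free two-atom case, tractability via the greedy fixpoint algorithm $\Cqk$), and for \twowaydet queries the same semantic frontier given by the existence of a fork-\tripath (hardness gadget) versus its absence (tractability by combining $\Cqk$ with the bipartite-matching algorithm), with the main technical burden correctly located in proving completeness of that combination. The only cosmetic deviation is that you phrase the combination as a sequential pipeline ($\Cqk$ first, matching on the residue), whereas the paper proves $\certain(q) = \Cqk \vee \neg\matching$ by partitioning the database into $q$-connected components, each of which either admits no \tripath or is a clique-database---a detail of execution rather than a different approach.
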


Note that $\certain(q)$ is trivial if $q$ has only one atom. As we deal with
data complexity, it is then also trivial for any query equivalent (\emph{over
  all consistent databases}) to a query with one atom.  For a query $q=AB$ this
can happen in two cases: $(1)$ there is a homomorphism from $A$ to $B$ or from
$B$ to $A$ ;~ $(2)$ $\bkey(A)=\bkey(B)$ (the query is then always equivalent
over consistent databases, to a single atom $R(C)$ where $C$ is the most general term that has homomorphism from both $A$ and $B$.) Hence, we will
assume in the rest of this paper that $q = AB$ is such that $\bkey(A)\ne \bkey(B)$
  and $q$ is not equivalent to any of its atoms.

  In the rest of the paper we will often underline the first $l$ positions of
  an atom or a fact in order to highlight the primary-key positions. We then
  write $R(\underline{xyz}~uv)$ to denote an
  atom involving a relation of signature $[5,3]$. Similarly we write
  $R(\underline{\alpha\beta\gamma}~\delta\epsilon)$ to denote a fact over the
  same signature.

\section{Dichotomy classification}\label{sec-classify}
The decision procedure for deciding whether the certainty of a query is
hard or easy to compute works as follows.

$\bullet$ We first associate to any query a canonical self-join-free query by
simply renaming the two relation symbols. If certainty of the resulting query
is hard, and this can be tested using the syntactic characterization
of~\cite{DBLP:journals/ipl/KolaitisP12}, then it is also hard for the initial
query. This is shown in \Cref{sec-no-hmd}.

$\bullet$ In \Cref{sec-key subseteq ptime} we give a simple syntactic condition guaranteeing
that the greedy polynomial time fixpoint algorithm of~\cite{ICDTJournal}
(presented in \Cref{sec-cert-algo}) computes certainty.

The remaining queries, where the two syntactic tests mentioned above fail, are called \twowaydet. They enjoy some nice
semantic properties that are described in \Cref{sec-tripath definition} and that we exploit to pinpoint their complexity.
Towards this, we define in \Cref{sec-tripath definition} a special kind of
database called \tripath whose solutions to the query have a particular
structure. We distinguish two variants of \tripath, as fork-\tripath and
triangle-\tripath.

$\bullet$ If the query does not admit any \tripath (\ie~ neither fork \tripath nor
triangle \tripath) then certainty can be computed using the greedy fixpoint algorithm as
show in  \Cref{sec-tripath or chase}.

$\bullet$ If the query admits a fork-\tripath, then certainty is \conp-hard as shown in
\Cref{sec-fork tripath and conp}.

$\bullet$ Finally, for queries that admits a triangle-\tripath but no
fork-\tripath, certainty can be computed in \ptime. This is shown in \Cref{sec-triangle
  tripath}. For such queries, we prove that the algorithm of
\Cref{sec-cert-algo} is not expressive enough to compute certainty. However we prove that a combination of it together with a known
bipartite matching-based algorithm (again from \cite{ICDTJournal}) is correct.


\section{First \conp-hard case}\label{sec-no-hmd}
Given a query $q = AB$, we can associate it with a canonical self-join-free query $\sjf(q)$ over a
schema with two distinct relational symbols\footnote{In \cref{sec-prelims} we have
  defined all the notions with respect to a single relation in the
  vocabulary. In this section, and only here, we consider two relations. Since the definitions are standard, we will not state them explicitly.} $R_1$ and $R_2$ of the same arity as
$R$. The query $\sjf(q)$ is defined by replacing $R$ by $R_1$ in $A$ and $R$
by $R_2$ in $B$. For instance, if
$q_1 = R(\underline{xu}~xv) \land R(\underline{vy}~uy)$ then
$\sjf(q_1) = R_1(\underline{xu}~xv) \land R_2(\underline{vy}~uy)$.  Intuitively,
$\sjf(q)$ is the same query as $q$ but with two different relation names.

We show that computing the certainty of $q$ is always harder than
computing the certainty of $\sjf(q)$. 
This is the only place where we use the assumption that $q$ is not equivalent to a one-atom query.

\begin{proposition}\label{prop-sjf-to-sj}
Let $q$ be a query. There is a polynomial time reduction from $\certain(\sjf(q))$
to $\certain(q)$.
\end{proposition}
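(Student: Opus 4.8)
The plan is to reduce $\certain(\sjf(q))$ to $\certain(q)$ by a \emph{fact-wise} transformation. Given an instance $D'$ of $\sjf(q)$ over $R_1,R_2$, I would build a single-relation instance $D$ over $R$ by sending each $R_1$-fact to one $R$-fact (whose intended role is to match $A$) and each $R_2$-fact to one $R$-fact (whose intended role is to match $B$). The difficulty is that once both relations live in $R$, the self-join lets a fact match the ``wrong'' atom and thereby create spurious solutions. To prevent this I would \emph{decorate} each position with fresh tag constants according to the variable sitting there in $A$ (resp.\ $B$): a position carrying a variable private to $A$ gets a role marker ``$A$'', a position carrying a variable private to $B$ gets ``$B$'', and a position carrying a shared variable $v$ (one occurring in both atoms) is replaced by the pair $\langle\text{value},\tau_v\rangle$, where $\tau_v$ is a neutral tag depending only on $v$. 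Crucially the neutral tags are common to both roles, so that the value component still carries the original data needed to realize joins, while distinct variables never collide.

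With this encoding I would establish three correspondences. First, block/repair correspondence: since the decoration at each position is a deterministic function of the original value and the fixed atom, key-equal $R_1$-facts map to key-equal $R$-facts and non-key-equal ones to non-$\sim$ facts, so each $R_1$-block (and each $R_2$-block) maps bijectively onto an $R$-block; moreover the $R_1$-image keys must be kept disjoint from the $R_2$-image keys, so that repairs of $D$ are in bijection with pairs (repair of the $R_1$-part, repair of the $R_2$-part), i.e.\ with repairs of $D'$. Second, \emph{role purity}: an $R_1$-image matches $A$ exactly when the original fact respects the pattern of $A$, and it \emph{never} matches $B$ (symmetrically for $R_2$). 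Third, join preservation: a pair $(g_1,g_2)$ of an $R_1$-image and an $R_2$-image is a solution of $q$ in $D$ iff the two original facts form a solution of $\sjf(q)$ in $D'$, because the only constraints imposed by matching the shared variables are equalities between the value components at shared positions. Combining the three, $q$ is certain for $D$ iff $\sjf(q)$ is certain for $D'$, and $D$ has linear size, giving the reduction.

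The main obstacle, and the place where the standing assumption that $q$ is not equivalent to a single atom is used, is ruling out spurious solutions. Role purity is where the hypothesis of \emph{no homomorphism} $A\to B$ or $B\to A$ enters: with the per-variable tagging, the equality pattern of an $R_1$-image is contained in that of $A$---two positions can be equal in the image only if they carry the same variable of $A$---so an $R_1$-image can match $B$ only when every repeated variable of $B$ lands on positions carrying the same variable of $A$, which is exactly a homomorphism $B\to A$, contradicting the hypothesis; the neutrality of the shared tags is what kills the subtle case where \emph{accidental} equalities among shared-variable values of the input might otherwise let an image match the wrong atom. Block disjointness across roles is where $\bkey(A)\ne\bkey(B)$ enters: this inequality of key tuples yields a key position $i_0\le l$ with $A[i_0]\ne B[i_0]$, and at $i_0$ the two roles necessarily receive different tags (a role marker versus a neutral tag, or two distinct neutral tags), so no $R_1$-image can be key-equal to an $R_2$-image. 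I expect the bulk of the work to be the careful verification of role purity for facts that do \emph{not} respect the atom's pattern, since these must still be shown never to match the opposite atom, so as not to introduce solutions absent from $D'$.
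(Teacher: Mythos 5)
Your overall strategy---a fact-wise encoding that tags positions so as to forbid cross-matching, with tags at shared positions common to both roles so that joins survive---is exactly the strategy of the paper's proof. But your tagging is too coarse, and this breaks the key step. The paper tags every position with the \emph{variable} sitting there: an $R_1$-fact $R_1(\bar u)$ becomes the $R$-fact whose $i$-th entry is the pair $\tup{A[i],\bar u[i]}$. You instead collapse all variables private to $A$ into a single role marker ``$A$''. As a result your role-purity invariant---``two positions can be equal in the image only if they carry the same variable of $A$''---is false for your own encoding: two positions carrying \emph{distinct} private variables of $A$ receive the same tag, so their entries become equal whenever the input values accidentally coincide. (Your later phrase ``per-variable tagging'' describes the paper's encoding, not the one you defined; the accidental-equality danger you correctly flag for shared variables is precisely what you left unguarded for private ones.)

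Concretely, take signature $[4,1]$, $A=R(\underline{x}\,p\,q\,q)$ and $B=R(\underline{y}\,w\,w\,y)$. There is no homomorphism in either direction, $\bkey(A)\neq\bkey(B)$, and the two atoms share no variables, so the standing assumptions hold. Let $D=\set{R_1(1,2,3,3),\ R_1(5,7,7,5)}$. Its unique repair contains no $R_2$-fact, so $D\notin\certain(\sjf(q))$. Under your encoding every position of both facts gets the tag ``$A$''; the image of $R_1(5,7,7,5)$ then has equal entries at positions $2,3$ and at positions $1,4$, hence it matches $B$, while the image of $R_1(1,2,3,3)$ matches $A$. Since $A$ and $B$ share no variables, the unique repair of the image database satisfies $q$, so the image lies in $\certain(q)$: your reduction maps a no-instance to a yes-instance. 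With the paper's per-variable tags the image of $R_1(5,7,7,5)$ has $\tup{p,7}\neq\tup{q,7}$ at positions $2,3$, so no spurious match arises. The fix is simply to use the variable itself as the tag at every position, private as well as shared; with that change your three correspondences (blocks, role purity, join preservation) do hold, and the argument becomes the paper's proof.
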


\begin{proof}[Proof sketch.]
  Assume $q=AB$ where $A$ and $B$ are atoms using the relational symbol
  $R$. Let $R_1$ and $R_2$ be the symbols used in $\sjf(q)$.
  Let $D$ be a database containing $R_1$-facts and $R_2$-facts. We construct in
  polynomial time a database $D'$ containing $R$-facts such that
  $D \models \certain(\sjf(q))$ iff $D'\models \certain(q)$.

  For every fact $a=R_1(\bu)$ of $D$, let $\mu(a)=R(\bv)$ be a fact where every
  position $i$ of $\bv$ is the pair $\tup{z,\alpha}$ where $z$ is the variable
  at position $i$ in $A$ while $\alpha$ is the element at position $i$ in
  $\bu$. Similarly if $a=R_2(\bu)$ then $\mu(a)=R(\bv)$ where every position
  $i$ of $\bv$ is the pair $\tup{z,\alpha}$ where $z$ is the variable
  at position $i$ in $B$ while $\alpha$ is the element at position $i$ in
  $\bu$. Let $D'=\mu(D)$. It turns out that $D'$ has the desired property
  and this requires that $q$ is not equivalent to a one-atom query.
\end{proof}

It follows from \Cref{prop-sjf-to-sj} that whenever $\sjf(q)$ is \conp-hard
then $\certain(q)$ is also \conp-hard. It turns out that we know
from~\cite{DBLP:journals/ipl/KolaitisP12} which self-join-free queries with two
atoms are hard. This yields the following result.

\begin{theorem}\label{thm-sjf-to-sj}
  Let $q=AB$ be such that both the following conditions hold :
  \begin{enumerate}
    \item $\vars(A)\cap \vars(B) \not\subseteq \key(A)$ and $\vars(A)\cap
      \vars(B) \not\subseteq \key(B)$ and $\key(A)\not\subseteq \key(B)$ and $\key(B)\not\subseteq \key(A)$;\label{negHA}

  \item 
  $\key(A) \not\subseteq \vars(B)$ or 
  $\key(B) \not\subseteq \vars(A)$. \label{negHMD}
  \end{enumerate}
  Then $\certain(q)$ is \conp-complete.
\end{theorem}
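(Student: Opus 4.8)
The plan is to reduce the hardness entirely to the known dichotomy for self-join-free two-atom queries together with \Cref{prop-sjf-to-sj}. Since $\certain(q)$ is already in \conp, it suffices to prove \conp-hardness, and by \Cref{prop-sjf-to-sj} it is enough to show that $\certain(\sjf(q))$ is \conp-hard whenever conditions~(1) and~(2) hold. The query $\sjf(q)$ is self-join-free with exactly two atoms, so I can appeal to the characterization of~\cite{DBLP:journals/ipl/KolaitisP12}. The entire content of the proof is then to check that conditions~(1) and~(2), read off the structure of $q$, are \emph{exactly} this hardness criterion. As $\sjf$ only renames the relation symbol, the two atoms of $\sjf(q)$ inherit the keys and variable sets of $A$ and $B$ verbatim, so this translation is immediate.

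I would phrase the criterion through the attack graph of $\sjf(q)$ (equivalently, the syntactic conditions of~\cite{DBLP:journals/ipl/KolaitisP12}): $\certain(\sjf(q))$ is \conp-hard precisely when the attack graph on the two atoms contains a \emph{strong} cycle. The verification splits into two independent computations. First, computing for each atom $F\in\set{A,B}$ the closure of $\key(F)$ under the single functional dependency $\key(G)\to\vars(G)$ contributed by the other atom $G$, one sees that $A$ and $B$ attack each other (so a $2$-cycle is present) exactly when $\key(A)\not\subseteq\key(B)$, $\key(B)\not\subseteq\key(A)$, $\vars(A)\cap\vars(B)\not\subseteq\key(A)$ and $\vars(A)\cap\vars(B)\not\subseteq\key(B)$, which is precisely condition~(1). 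Second, computing the closure of each key under \emph{both} functional dependencies, one finds that $\key(A)\to\key(B)$ is implied iff $\key(B)\subseteq\vars(A)$, and symmetrically $\key(B)\to\key(A)$ is implied iff $\key(A)\subseteq\vars(B)$; hence the cycle fails to be weak, \ie\ it is strong, exactly when $\key(B)\not\subseteq\vars(A)$ or $\key(A)\not\subseteq\vars(B)$, which is precisely condition~(2).

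Putting the two computations together, conditions~(1) and~(2) hold simultaneously iff the attack graph of $\sjf(q)$ has a strong cycle, and therefore $\certain(\sjf(q))$ is \conp-complete, in particular \conp-hard. \Cref{prop-sjf-to-sj} transports this hardness to $\certain(q)$, and combined with membership in \conp this yields that $\certain(q)$ is \conp-complete, as required.

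The main obstacle I anticipate is the bookkeeping in matching the two syntactic conditions to the strong-cycle criterion: one must be careful that an attack between the two atoms is witnessed by a shared non-key variable escaping the relevant closure, and that the weak/strong distinction is governed by containment of a key in the \emph{variable set} (not merely the key) of the other atom. The remaining delicacy is that the two computations are genuinely independent --- condition~(1) controls the \emph{existence} of the cycle and condition~(2) its \emph{strength} --- so both are needed, and the degenerate situations (empty $\vars(A)\cap\vars(B)$, or one key contained in the other) are exactly those ruled out by condition~(1) and hence require no separate treatment.
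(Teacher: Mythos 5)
Your proposal is correct and takes essentially the same approach as the paper: both reduce hardness of $\certain(q)$ to that of $\certain(\sjf(q))$ via \Cref{prop-sjf-to-sj} and then invoke the known dichotomy for self-join-free two-atom queries of~\cite{DBLP:journals/ipl/KolaitisP12}, whose \conp-hardness conditions are literally conditions~(\ref{negHA}) and~(\ref{negHMD}). Your detour through the attack-graph/strong-cycle formulation (cycle existence $\Leftrightarrow$ condition~(\ref{negHA}), strength $\Leftrightarrow$ condition~(\ref{negHMD})) is a correct consistency check, but it is not needed, since the cited characterization already states the hardness criterion in exactly this syntactic form.
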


For instance we can deduce from \Cref{thm-sjf-to-sj} that the query $q_1$
mentioned above is such that $\certain(q_1)$ is \conp-complete since $u$ and
$v$ are shared variables but $u \not\in \key(B)$, $v\not\in\key(A)$, moreover $\key(B) \not\subseteq \key(A)$ and $x\in \key(A)$ but is not in $\vars(B)$.

Note that the converse of \Cref{prop-sjf-to-sj}
is not true. For instance, the query
$q_2 = R(\underline{xu}~xy) \land R(\underline{uy}~xz)$ is such that
$\certain(\sjf(q_2))$ can be solved in polynomial time by the
characterization of~\cite{DBLP:journals/ipl/KolaitisP12}, but as we will see,
$\certain(q_2)$ is \conp-hard.

\section{The greedy fixpoint algorithm}\label{sec-cert-algo}
In most of the cases where we prove that $\certain(q)$ is in \ptime, we use the following greedy fixpoint algorithm which was introduced in \cite{ICDTJournal}. 
For a fixed query $q$ and $k \geq 1$, we define the algorithm $\Cqk$. It takes
a database $D$ as input and runs in time $O(n^k)$ where $n$ is the size of
$D$. For a database $D$, a set $S$ of facts of $D$ is called a $k$-set if $|S|\le k$
and $S$ can be extended to a repair (\ie~ S contains at most one fact from
every block of $D$).

The algorithm inductively computes a set $\Delta_k(q,D)$ of $k$-sets while
maintaining the invariant that for every repair $r$ of $D$ and every
$S\in \Delta_k(q,D)$ if $S\subseteq r$ then $r\models q$. The algorithm returns
{\it yes} if eventually $\emptyset \in \Delta_k(q,D)$. Since all repairs
contain the empty set, from the invariant that is maintained, it follows that
$D\models \certain(q)$. The set $\Delta_k(q,D)$ is computed as follows:

Initially $\Delta_k(q,D)$ contains all $k$-set $S$ such that $S\models q$. Clearly,
this satisfies the invariant.
Now we iteratively add a $k$-set $S$ to $\Delta_k(q,D)$ if there exists a block
$B$ of $D$ such that for every fact $u\in B$ there exists
$S' \subseteq S\cup \{ u\}$ such that $S' \in \Delta_k(q,D)$. Again, it is
immediate to verify that the invariant is maintained.

This is an inflationary fixpoint algorithm and notice that the initial and
inductive steps can be expressed in \fo. If $n$ is the number of facts of $D$,
the fixpoint is reached in at most $n^k$ steps.

For a fixed $k$, we write {\bf $D \models \Cqk$} or {\bf $D \in \Cqk$} to denote that $\Cqk$ returns {\it yes} upon input $D$.
Note that $\Cqk$ is always an under-approximation of $\certain(q)$, \ie~
whenever $\Cqk$ returns  {\it yes} then $q$ is certain for the input database. However,
$\Cqk$ could give false negative answers. In \cite{ICDTJournal} it is proved that this algorithm captures all polynomial time cases for self-join-free queries and path queries by choosing $k$ to be the number of atoms in the query.


\section{First polynomial time case}\label{sec-key subseteq ptime}

In view of \Cref{thm-sjf-to-sj}, 
it remains to consider the case where one of the conditions
of \Cref{thm-sjf-to-sj} is false.
In this section we prove that if the condition~(\ref{negHA}) is false for $q$ then
$\certain(q)$ is in \ptime.  By symmetry we only consider the case where  $\vars(A)\cap \vars(B) \subseteq \key(B)$ or
$\key(A)\not\subseteq \key(B)$. The other case follows from the fact that $q=AB$ is equivalent to the query
$BA$.

\begin{theorem}\label{thm-hp-X+}
  Let $q = AB$ be such that $\key(A)\subseteq \key(B)$ or
  $\vars(A) \cap \vars(B) \subseteq \key(B)$. Then $\certain(q) = \cert_2(q)$,
  hence $\certain(q)$ is in \ptime.
\end{theorem}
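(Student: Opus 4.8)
The inclusion $\cert_2(q)\subseteq\certain(q)$ is exactly the soundness of the greedy algorithm recalled in \Cref{sec-cert-algo}, so the whole content of the theorem is the reverse inclusion $\certain(q)\subseteq\cert_2(q)$. I would prove its contrapositive: assuming $D\notin\cert_2(q)$, that is $\emptyset\notin\Delta_2(q,D)$, I would construct a repair $r$ of $D$ with $r\not\models q$. The first reduction is to observe that it suffices to build a repair $r$ containing \emph{no} element of $\Delta_2(q,D)$. Indeed, if such an $r$ satisfied $q$, then it would contain a solution $(a,b)$, and the set $\{a,b\}$ (or $\{a\}$ if $a=b$) is a $2$-set satisfying $q$, hence placed in $\Delta_2(q,D)$ already at the initial step and contained in $r$ --- a contradiction. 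So the goal becomes: select one fact per block so that no selected singleton and no selected pair lies in $\Delta_2(q,D)$.

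The construction is greedy over blocks, processed in an order to be fixed below, maintaining the invariant that the current partial selection $P$ contains no element of $\Delta_2(q,D)$ (the base case is $P=\emptyset$, which is legitimate since $\emptyset\notin\Delta_2(q,D)$). When a new block $B'$ is processed, I must find $u\in B'$ with $\{u\}\notin\Delta_2(q,D)$ and $\{a,u\}\notin\Delta_2(q,D)$ for every $a\in P$. The fixpoint condition gives two weak guarantees for free: applying the inductive rule with $S=\emptyset$ shows that not every $u\in B'$ can have $\{u\}\in\Delta_2(q,D)$ (else $\emptyset$ would enter $\Delta_2(q,D)$), and applying it with $S=\{a\}$ for a \emph{fixed} $a\in P$ shows that, were every $u\in B'$ to satisfy $\{u\}\in\Delta_2(q,D)$ or $\{a,u\}\in\Delta_2(q,D)$, then $\{a\}$ would enter $\Delta_2(q,D)$, contradicting the invariant. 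Thus a single previously chosen fact can always be dodged.

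The main obstacle is precisely that these guarantees only handle one prior fact at a time: a priori, different facts $u\in B'$ could conflict with different facts of $P$, so that no single $u$ avoids all of $P$ simultaneously. This is exactly the situation the hypothesis rules out, and establishing that it does is the heart of the proof. I would extract from the hypothesis a structural ``one-directionality'' of the dangerous pairs. In the case $\key(A)\subseteq\key(B)$, every solution $(a,b)$ satisfies that each coordinate of $\bkey(a)$ equals a coordinate of $\bkey(b)$, so $\bkey(a)$ is a function of $\bkey(b)$: the block of the $A$-side fact is determined by the block of the $B$-side fact. In the case $\vars(A)\cap\vars(B)\subseteq\key(B)$, the values shared between the two atoms reside in key positions of $B$ and are therefore fixed by $\bkey(b)$, so that, for facts matching $B$, whether $(a,b)$ is a solution depends only on $a$ and on the key (equivalently, the block) of $b$. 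In either case $B$ is ``not attacked'': the forcing between blocks flows in a single direction at the level of keys.

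I would then propagate this locality through the fixpoint to show that the same directionality holds for all pairs entering $\Delta_2(q,D)$ (not just those coming from the initial step), and use it to order the blocks so that, when $B'$ is processed, all conflicts of its facts with the already committed selection $P$ factor through a single earlier fact (respectively a single earlier block). With such an ordering the multi-way conflict cannot occur: if every $u\in B'$ failed, the conflicts could be consolidated into the inductive rule applied to that one earlier fact $a$ (or to $\emptyset$), forcing $\{a\}\in\Delta_2(q,D)$ (or $\emptyset\in\Delta_2(q,D)$) and contradicting the invariant. Hence a good $u$ always exists, the greedy construction goes through, and the resulting repair avoids $\Delta_2(q,D)$ and so falsifies $q$. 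I expect the delicate points to be the propagation of the structural lemma to inductively generated pairs and the verification that the chosen block order is well founded (acyclicity of the induced key-dependency), which is where the failure of condition~(\ref{negHA}) is genuinely used.
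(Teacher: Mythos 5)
Your setup is sound: the reduction to ``build a repair avoiding $\Delta_2(q,D)$'' is valid (by the algorithm's invariant this is the same as a repair falsifying $q$), the two ``weak guarantees'' from the inductive rule are correct, and your structural reading of the two hypotheses matches the paper's zig-zag property (\Cref{lemma-hp}). But the heart of your argument --- which you yourself flag as the delicate point --- does not go through as described. The plan rests on the claim that the one-directionality of solutions propagates to \emph{all} pairs entering $\Delta_2(q,D)$, and this claim is false. The inductive rule can place a pair $\{p,u\}$ in $\Delta_2(q,D)$ when the blocks of $p$ and of $u$ are completely unrelated by the solution structure: it suffices that $p$ and $u$ both have solutions into a common third block $\gamma$ and that the facts of $\gamma$ related to neither of them have their singletons already in $\Delta_2(q,D)$. (Under $\vars(A)\cap\vars(B)\subseteq\key(B)$ this is easy to arrange, since by zig-zag any fact with a solution into $\gamma$ forms a solution with essentially every fact of $\gamma$ that occurs as a $B$-side of some solution.) Such ``sibling'' pairs, and their iterates, mean the conflict relation your greedy procedure must respect is strictly richer than the directed key-dependency you extract from the hypothesis; consequently no ordering of blocks confines the conflicts of a new block to a single earlier fact --- or even to two earlier facts, which is the most your consolidation argument could absorb, since the inductive rule only applies to sets $S$ with $|S|\le 2$. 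A block can then become stuck with each of its facts excluded by a different, pairwise-unrelated, earlier commitment, and no contradiction with $\emptyset\notin\Delta_2(q,D)$ follows; indeed a falsifying repair may exist but require different choices in earlier blocks, which your greedy (making arbitrary invariant-respecting choices, with no backtracking) cannot revisit.

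The paper's proof avoids having to analyze the inductively generated sets at all. It argues directly on repairs: \Cref{lemma-hp-X+} shows, using only zig-zag and non-membership in $\Delta_2(q,D)$, that from any repair $r$ with a solution $q(ab)$ one can either conclude $\{a\}\in\Delta_2(q,D)$ or build a repair with \emph{strictly fewer} solutions (replace $b$ by a suitable $b'$; zig-zag forces all new solutions to have $b'$ on the left, and the replacement is iterated). Two minimality arguments then finish the proof of $\certain(q)\subseteq\cert_2(q)$: a repair with a $\subseteq$-minimal set of solutions shows that every repair contains a fact whose singleton is in $\Delta_2(q,D)$, and a repair minimizing the number of such facts shows that an entire block consists of such facts, whence $\emptyset\in\Delta_2(q,D)$ by one application of the inductive rule. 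If you want to salvage your approach, you would need either a proof that a suitable processing order and choice rule exist --- which looks as hard as the theorem itself --- or to switch to this strategy of repeatedly repairing a hypothetical counterexample rather than constructing one greedily block by block.
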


From \Cref{thm-hp-X+} it follows that the complexity of computing certain answers for queries like
$q_3=R(\underline{x}~ y)\land R(\underline{y}~ z)$ and
$q_4=R(\underline{xx}~ uv) \land R(\underline{xy}~ ux)$ is in \ptime. In the
case of $q_3$ this is because the only shared variable is $y$ and $\key(B)=\set{y}$. In the case of $q_4$ this is because $\key(A)=\set{x}\subseteq\set{xy}=\key(B)$.

In the remaining part of this section we prove \Cref{thm-hp-X+}.
The main consequence of the assumption on the query is the following zig-zag
property. We say that $q$ satisfies the zig-zag property if for all
database $D$, for all facts $a,b,b',c$ of $D$ such that $a\not\sim c$,
$a\neq b$ and $b\sim b'$, if $D\models q(ab)$ and $D \models q(cb')$ then
$D\models q(ab')$.

\begin{lemma}\label{lemma-hp}
  Let $q=AB$ be such that $\vars(A) \cap \vars(B) \subseteq \key(B)$ or
  $\key(A)\subseteq \key(B)$. Then $q$ satisfies the zig-zag property.
\end{lemma}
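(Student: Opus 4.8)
The plan is to fix the data and analyze the two mappings witnessing the two given solutions. Concretely, suppose $D \models q(ab)$ via a mapping $\mu_1$ (so $\mu_1(A) = a$ and $\mu_1(B) = b$) and $D \models q(cb')$ via a mapping $\mu_2$ (so $\mu_2(A) = c$ and $\mu_2(B) = b'$), with $b \sim b'$, $a \neq b$, and $a \not\sim c$. The single observation driving everything is that, because $b \sim b'$, the facts $b$ and $b'$ carry the same values on the key positions of $B$; hence $\mu_1$ and $\mu_2$ agree on every variable of $\key(B)$. Indeed, any $z \in \key(B)$ sits at some key position $j$ of $B$, so $\mu_1(z) = b[j] = b'[j] = \mu_2(z)$ since $b[j] = b'[j]$ on key positions. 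I would isolate this agreement as a preliminary claim and then split on the two disjuncts of the hypothesis.

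In the first case, $\vars(A) \cap \vars(B) \subseteq \key(B)$, the agreement just obtained already covers all shared variables, so $\mu_1$ and $\mu_2$ coincide on $\vars(A) \cap \vars(B)$. I can then simply glue them: let $\nu$ equal $\mu_1$ on $\vars(A)$ and $\mu_2$ on $\vars(B)$. This is well defined precisely because of the agreement, and it satisfies $\nu(A) = \mu_1(A) = a$ and $\nu(B) = \mu_2(B) = b'$, witnessing $D \models q(ab')$. Note that this case uses neither $a \neq b$ nor $a \not\sim c$.

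The second case, $\key(A) \subseteq \key(B)$, I expect to be the real subtlety, and it is handled not by constructing a solution but by showing that the premises are contradictory, so that the implication holds vacuously. Since $\key(A) \subseteq \key(B)$, the agreement of $\mu_1$ and $\mu_2$ on $\key(B)$ extends to $\key(A)$; reading this off the key positions of $A$ gives $\bkey(a) = \bkey(\mu_1(A)) = \bkey(\mu_2(A)) = \bkey(c)$, that is $a \sim c$. This flatly contradicts the assumption $a \not\sim c$, so no tuple $(a,b,b',c)$ meeting all the premises of the zig-zag property can exist in this case, and the property is vacuously true.

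The main conceptual hurdle is recognizing that the two disjuncts demand opposite strategies: the first is a direct mapping-merge, whereas the second is vacuous and hinges entirely on the side condition $a \not\sim c$. Without that condition the property would genuinely fail here, for instance on $q_4$, where $b$ and $b'$ may differ on a shared non-key position and $q(ab')$ need not hold. Once the preliminary agreement claim is isolated each case is short; the only care needed is in confirming that $\mu_1$ and $\mu_2$ agree on exactly the right set of variables in each case.
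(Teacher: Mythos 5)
Your proof is correct. The paper only states \Cref{lemma-hp} without including its proof, but your argument is the natural (and essentially the intended) one: the key-equality $b\sim b'$ forces the two witnessing homomorphisms to agree on every variable of $\key(B)$, so under the disjunct $\vars(A)\cap\vars(B)\subseteq\key(B)$ they glue into a homomorphism witnessing $q(ab')$, while under the disjunct $\key(A)\subseteq\key(B)$ the same agreement yields $\bkey(a)=\bkey(c)$, contradicting $a\not\sim c$, so the property holds vacuously; both cases, including your observation that the second genuinely needs $a\not\sim c$ (as $q_4$ shows), check out.
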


The key to the proof of \Cref{thm-hp-X+} is the following lemma.

\begin{lemma}\label{lemma-hp-X+}
  Let $q = AB$ be a query satisfying the zig-zag property. For all databases
  $D$ and for all repair $r$ of $D$ if $r \models q(ab)$ then
  $\set{a}\in \Delta_2(q,D)$ or there exists a repair $s$ of $D$ such that
  $q(s)\subsetneq q(r)$.
\end{lemma}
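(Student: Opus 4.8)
The plan is to use Lemma~\ref{lemma-hp-X+} as the descent step in a well-founded induction establishing $\certain(q)=\cert_2(q)$, so I would prove it by a local ``repair-improvement'' argument driven entirely by the zig-zag property. Fix a database $D$, a repair $r$, and a solution with $r\models q(ab)$; let $\mathcal B$ be the block of $D$ containing $b$. The dichotomy I would set up is on whether $a$ matches $A$ together with \emph{every} fact of $\mathcal B$ as its $B$-match: either $D\models q(ab_j)$ for all $b_j\in\mathcal B$, or not.

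First I would treat the favourable case. If $D\models q(ab_j)$ for every $b_j\in\mathcal B$ (and $a\not\sim b_j$, so that $\{a,b_j\}$ is a genuine $2$-set), then each $\{a,b_j\}$ satisfies $q$ and hence enters $\Delta_2(q,D)$ at initialisation. Since $\{a,b_j\}\subseteq\{a\}\cup\{b_j\}$, the inductive rule of the fixpoint, applied to the block $\mathcal B$, adds $\{a\}$ to $\Delta_2(q,D)$, giving the first disjunct.

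Second, suppose some $b_j\in\mathcal B$ has $D\not\models q(ab_j)$. I would aim for the second disjunct via the candidate repair $s=r[b\to b_j]$. As $b\notin s$, the solution $(a,b)$ is destroyed, so $(a,b)\in q(r)\setminus q(s)$; it then suffices to show that $b_j$ participates in no solution of $s$, which yields $q(s)\subseteq q(r)$ and hence $q(s)\subsetneq q(r)$. The heart of this step is ruling out the solutions $b_j$ could create. If $b_j$ matched the atom $B$ in $s$, say via a new solution $(x,b_j)$ with $x\in s$, I would apply the zig-zag property to the two solutions $q(ab)$ and $q(xb_j)$: here $b\sim b_j$, we may assume $a\neq b$, and we may assume $a\not\sim x$ (otherwise $x=a$, since both lie in the repair $s$). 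This forces $D\models q(ab_j)$, contradicting the choice of $b_j$. Hence $b_j$ cannot match $B$ in $s$.

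The main obstacle is the remaining possibility that $b_j$ matches the atom $A$ in $s$, producing a new solution $(b_j,y)$ with $y\in s$. The zig-zag property is silent here, since it only transfers $B$-matches across key-equal facts, and a $B$-block fact can genuinely realise the $A$-pattern using non-key values. To close this gap I expect to invoke the query hypothesis $\key(A)\subseteq\key(B)$ or $\vars(A)\cap\vars(B)\subseteq\key(B)$ directly: these constrain how a fact key-equal to $b$—whose key is pinned by the $B$-match witnessing $(a,b)$—can simultaneously satisfy the $A$-pattern, and should let me either re-derive $D\models q(ab_j)$ (again contradicting the choice of $b_j$), or place $\{b_j\}$ into $\Delta_2(q,D)$, or land in the degenerate case $y=b_j$ where $\{b_j\}\models q$. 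Carrying the whole argument inside a well-founded induction on $|q(r)|$, so that $r$ may be taken solution-minimal, would convert any surviving $b_j$-solution into an outright contradiction instead of something to be eliminated by a separate construction. Finally I would dispose of the degenerate self-join configurations set aside above, namely $a=b$ and $y=b_j$, each of which immediately puts a singleton subset of $\{a,b_j\}$ into $\Delta_2(q,D)$.
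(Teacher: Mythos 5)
Your handling of the easy case (all block-mates give solutions with $a$, hence $\{a\}\in\Delta_2(q,D)$ by the fixpoint rule) and of backward solutions $q(xb_j)$ via zig-zag coincides with the paper's proof. But the overall structure has a gap that your own fallbacks do not close. The case split is on the wrong condition: the paper starts by assuming $\{a\}\notin\Delta_2(q,D)$ and uses the \emph{fixpoint rule itself} to extract a fact $b'\sim b$ such that \emph{no subset} of $\{a,b'\}$ belongs to $\Delta_2(q,D)$. This yields not just $D\not\models q(ab')$ but also $D\not\models q(b'a)$ and, crucially, $\{b'\}\notin\Delta_2(q,D)$. Your Case 2 hypothesis (``some $b_j$ with $D\not\models q(ab_j)$'') is strictly weaker and does not imply the disjunct you then set out to prove: it is compatible with $\{a\}\in\Delta_2(q,D)$ holding anyway (for instance when $q(b_ja)$ holds, so that $\{a,b_j\}$ enters $\Delta_2(q,D)$ at initialization) while $r$ is solution-minimal; in that situation the second disjunct is simply false, so no argument can establish it, and indeed your swap $s=r[b\to b_j]$ creates the new solution $(b_j,a)$, which is of the ``forward'' kind that your zig-zag step cannot touch.

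Second, and this is the main missing idea: forward solutions $q(b'c)$ in $r[b\to b']$ cannot be \emph{excluded} at all --- not by zig-zag, not by minimality of $r$, and not by importing the syntactic hypothesis of the theorem (which is in any case unavailable: the lemma assumes only the zig-zag property and is proved at that level of generality). The paper does not exclude them; it \emph{eliminates them by iterating the swap}. Because $\{b'\}\notin\Delta_2(q,D)$, for the block of any forward partner $c$ there exists $c'\sim c$ with no subset of $\{b',c'\}$ in $\Delta_2(q,D)$; one replaces $c$ by $c'$, uses zig-zag again to see that the newly created solutions again point only forward, and repeats, building a sequence of repairs $r_0=r,r_1,\dots,r_n$ whose last element has strictly fewer solutions than $r$. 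This inductive construction, fuelled by the $\Delta_2$-non-membership invariant carried along from the fixpoint-based choice of $b'$, is exactly what is absent from your proposal. Note also that your minimality device gives no contradiction here: minimality of $r$ only bites against a repair $s$ with $q(s)\subsetneq q(r)$, and says nothing about a repair whose solution set is incomparable with $q(r)$, which is precisely what a surviving forward solution produces.
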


\begin{proof}[Proof sketch]
  Assume that $\set{a}\not\in \Delta_2(q,D)$. This means that there exists some
  $b'\sim b$ such that $\set{a,b'} \not\in \Delta_2(q,D)$. Then consider $r' =
  r[b\to b']$. Notice that the only new solutions in $r'$ that are not in $r$
  should involve $b'$. Hence if $b'$ is not a part of  any solution, then $r'$
  is the required repair. Otherwise note that $b'$ can only be a part of
  solutions of the form $r' \models q(b'c)$ (if $r'\models q(cb')$ for some $c$
  then by zig-zag property we we also have $r'\models q(ab')$ which is a
  contradiction). We also have $\set{b'}\not\in \Delta_2(q,D)$ (otherwise
  $\set{a,b'} \in \Delta_2(q,D)$ which is a contradiction) and we can repeat
  the construction. This way, we inductively build a sequence of repairs $r_0,r_1,\ldots r_n$ such that $r_0 = r$ and $r_n$ is the desired repair.
\end{proof}

\begin{proof}[Proof of \Cref{thm-hp-X+}]

Let $D\models \certain(q)$. We prove that $D\models\cert_2(q)$. 
  
We first show that every repair $r$ of $D$ contains some fact $a \in r$ such
that $\set{a} \in \Delta_2(q,D)$. Pick an arbitrary repair $r$ of $D$. 

Let $r'$ be
a minimal repair having $q(r') \subseteq q(r)$ (possibly $r'=r$). Since all the
repairs contain a solution, there exist facts $a, b$ of $r'$ such that
$r'\models q(ab)$. By Lemma~\ref{lemma-hp}, $q$ satisfies the zig-zag property,
thus we can apply Lemma~\ref{lemma-hp-X+}. This implies that
$\{a\}\in\Delta_2(q,D)$, otherwise one can construct another repair $s$ of $D$
such that $q(s) \subsetneq q(r')$, contradicting minimality of $r'$.  By the
choice of $r'$, one has also $r \models q(ab)$, thus we have $a\in r$ such
that $\{a\}\in\Delta_2(q,D)$.

Now let $r_{min}$ be a repair of $D$ containing the minimum number of facts $a$ such that $\set{a} \in \Delta_2(q,D)$. Let $m$ be this minimum number. By the property proved above there exists a fact $b \in r_{min}$ such that $\set{b} \in \Delta_2(q,D)$. We claim that for all $b' \sim b$, we have $\set{b'} \in \Delta_2(q,D)$. Suppose not, then $r_{min}[b \rightarrow b']$ contains $m-1$ facts in $\Delta_2(q,D)$, contradicting minimality of $r_{min}$.

Overall this proves $\emptyset \in \Delta_2(q,D)$ and hence $D\models \cert_2(q)$. 
\end{proof}

\section{\twowaydet queries}\label{sec-tripath definition}\label{sec-nice tripath}
From \Cref{thm-sjf-to-sj} and \Cref{thm-hp-X+}  it  remains to consider the case where condition~(\ref{negHA}) of \Cref{thm-sjf-to-sj} is true and condition~(\ref{negHMD}) is false. Thus we can assume that the query satisfies the following conditions\footnote{We can drop the condition $\vars(A)\cap \vars(B) \not\subseteq \key(A)$  because  $\key(A)\not\subseteq \key(B)$ and $\key(B)\subseteq \vars(A)$ together imply $\vars(A) \cap \vars(B) \not\subseteq \key(A)$ (and  we drop $\vars(A)\cap
      \vars(B) \not\subseteq \key(B)$ symmetrically).}:
\begin{align*}
  \key(A)\not\subseteq \key(B) \text{ and }
  \key(B)\not\subseteq \key(A) \text{ and }\\
  \key(A)\subseteq \vars(B) \text{ and } \key(B) \subseteq \vars(A)
\end{align*}

We call such queries {\bf \twowaydet}. Queries that are \twowaydet have special properties
that we will exploit to pinpoint the complexity of their consistent evaluation
problem. They are summarized in the following lemma.

\begin{lemma}
\label{lemma-key inclusion consequence}
Let $q$ be a \twowaydet query. Then for all
database $D$ and for all facts $a,b,c\in D$ suppose $D\models q(ab)$ then :
\begin{itemize}
\item  if $D \models q(ac)$ then $c\sim b$
\item  if $D \models q(cb)$ then $c\sim a$
\end{itemize}
\end{lemma}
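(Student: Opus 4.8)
The plan is to observe that a witnessing mapping for a solution is pinned down on the variables of each atom by the corresponding fact, and then to transport key positions across atoms using the two inclusions guaranteed by the \twowaydet hypothesis.

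First I would record the basic determinacy fact. If $D \models q(ab)$ via a map $\mu$ (so that $\mu(A) = a$ and $\mu(B) = b$), then the equality of tuples $\mu(A) = a$ forces $\mu(x) = a[i]$ for every occurrence of a variable $x$ at a position $i$ of $A$. Since these occurrences range over exactly $\vars(A)$, the restriction $\mu|_{\vars(A)}$ depends only on $a$; symmetrically $\mu|_{\vars(B)}$ depends only on $b$. Consequently, any two witnessing maps that yield the same $A$-fact agree on all of $\vars(A)$, and any two that yield the same $B$-fact agree on all of $\vars(B)$.

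For the first item I would fix witnesses $\mu$ for $q(ab)$ and $\nu$ for $q(ac)$. Both send $A$ to $a$, so $\mu$ and $\nu$ coincide on $\vars(A)$. The \twowaydet condition $\key(B) \subseteq \vars(A)$ then places every key variable of $B$ inside the region where $\mu$ and $\nu$ agree, whence $\bkey(b) = \mu(\bkey(B)) = \nu(\bkey(B)) = \bkey(c)$, which is precisely $c \sim b$. The second item is symmetric: witnesses for $q(ab)$ and $q(cb)$ both send $B$ to $b$, so they agree on $\vars(B)$, and the condition $\key(A) \subseteq \vars(B)$ gives $\bkey(a) = \mu(\bkey(A)) = \nu(\bkey(A)) = \bkey(c)$, i.e.\ $c \sim a$.

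The proof is short and I do not expect a genuine obstacle; the only point demanding care is the treatment of repeated variable occurrences, where one must confirm that $\mu|_{\vars(A)}$ is genuinely well-defined and uniquely fixed by $a$ (so that distinct witnesses cannot diverge on $\vars(A)$), which is immediate once $\mu(A)=a$ is read as an equality of tuples position by position. It is worth noting that the two non-inclusions $\key(A) \not\subseteq \key(B)$ and $\key(B) \not\subseteq \key(A)$ from the definition play no role here; only the two inclusions $\key(A) \subseteq \vars(B)$ and $\key(B) \subseteq \vars(A)$ are used, which is exactly the content reflected by the name \twowaydet.
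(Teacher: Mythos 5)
Your proof is correct and is essentially the paper's own argument: the paper likewise proves the first item by observing that $\key(B)\subseteq\vars(A)$ forces $\bkey(c)=\bkey(b)$ once both solutions map $A$ to the same fact $a$, and handles the second item symmetrically via $\key(A)\subseteq\vars(B)$. Your write-up merely makes explicit the (correct) supporting observation that a witnessing map restricted to $\vars(A)$ is uniquely determined by the fact $a$, which the paper leaves implicit.
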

\begin{proof}
   Assume $q=AB$ and $D\models q(ac)$. As $\key(B)\subseteq \vars(A)$ it follows that
   that $\bkey(c) = \bkey(b)$. The second claim is argued symmetrically using $\key(A) \subseteq \vars(B)$.
\end{proof}

In other words, within a repair, a fact can be part of at most two
solutions. Moreover, when a fact $e$ is part of two solutions of the repair,
the solutions must be of the form $q(de)$ and $q(ef)$. We then say that the
fact $e$ is {\bf branching} (with $d$ and $f$). If in addition $q(fd)$
holds then we say that $def$ is a {\bf triangle}, otherwise $def$ is a
{\bf fork}. The facts that can potentially be part of two solutions in a repair  play a crucial role in our proofs.  

When $q$ is \twowaydet, the complexity of $\certain(q)$ will depend on the
existence of a database, called \tripath, whose solutions to $q$ can be
arranged into a tree-like shape with one branching fact as specified next.

Let $d,e,f$ be three facts of a database $D$ such that $e$ is branching with $d,f$. Depending on the key
inclusion conditions of $def$, we define $\bg(e)$ as follows:

\begin{tabular}{l l l} if $\key(d) \subseteq \key(e)$ and $\key(f)
\not\subseteq \key(e)$& then &$\bg(e)=\bkey(d)$\\ if $\key(d) \not\subseteq
\key(e)$ and $\key(f) \subseteq \key(e)$& then &$\bg(e)=\bkey(f)$\\ if $\key(d)
\subseteq \key(f) \subseteq \key(e)$ & then &$\bg(e)=\bkey(d)$\\

 if $\key(f) \subseteq \key(d) \subseteq \key(e)$ & then &$\bg(e)=\bkey(f)$\\

 in all remaining cases & &$\bg(e)=\bkey(e)$
\end{tabular}

Note that $\bg(e)$ is well-defined because from
\Cref{lemma-key inclusion consequence} it follows that any other triple of the
form $d'ef'$ in $D$ such that $e$ is branching with $d',f'$ is such that
$d'\sim d$ and $f'\sim f$.  If $e$ is not branching then we define
$\bg(e) = \bkey(e)$. We also denote by $\g(e)$ the set of elements occurring in
the tuple $\bg(e)$. From the definition we always have
$\g(e)\subseteq \key(e)$.

A {\bf \tripath} of $q$ is a database $\Theta$ such that each block $B$
of $\Theta$ contains at most two facts, and all the blocks of $\Theta$ can be
arranged as a rooted tree with exactly two leaf blocks and satisfy the
following properties (see \Cref{figure-tripath-generic}). Let $s$ be the parent
function between the blocks of $\Theta$ giving its tree structure: if $B$
is a block of $\Theta$ then $s(B)$ denotes the parent block of $B$. Then:

\begin{figure*}[ht!]  
\begin{subfigure}[t]{\linewidth}
    \centering\includegraphics[width=0.7\linewidth]{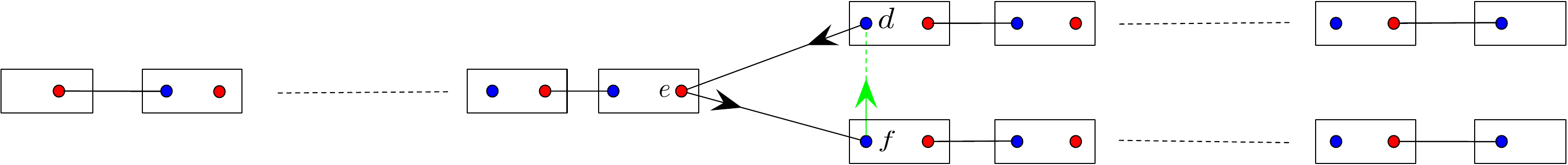}
    \caption{Generic structure of a \tripath. The rectangles denote blocks and
      in every block $B$, $a(B)$ is denoted by a red dot and $b(B)$ is denoted
      by a blue dot. The root block has only $a(B)$ and leaf blocks have only
      $b(B)$.  An undirected edge between two facts $s,t$ denotes that they form
      a solution $q\set{st}$. A directed edge from $s$ to $t$ denotes the
      solution $q(st)$.  The unique branching fact of the \tripath is denoted
      by $e$ which forms a solution with the facts $d$ and $e$ with
      $q(de) \land q(ef)$. $def$ is the center of the \tripath. If
      the green solution $q(fd)$ is present then we call it a
      triangle-\tripath, otherwise it is a fork-\tripath.  If the \tripath is
      not nice then there could be extra solutions to the query not depicted in
      the figure. The variable inclusion conditions are not depicted in the
      figure. }\label{figure-tripath-generic}
  \end{subfigure}
  
  \bigskip
  \begin{subfigure}[t]{\linewidth}
    \centering\includegraphics[scale=0.2]{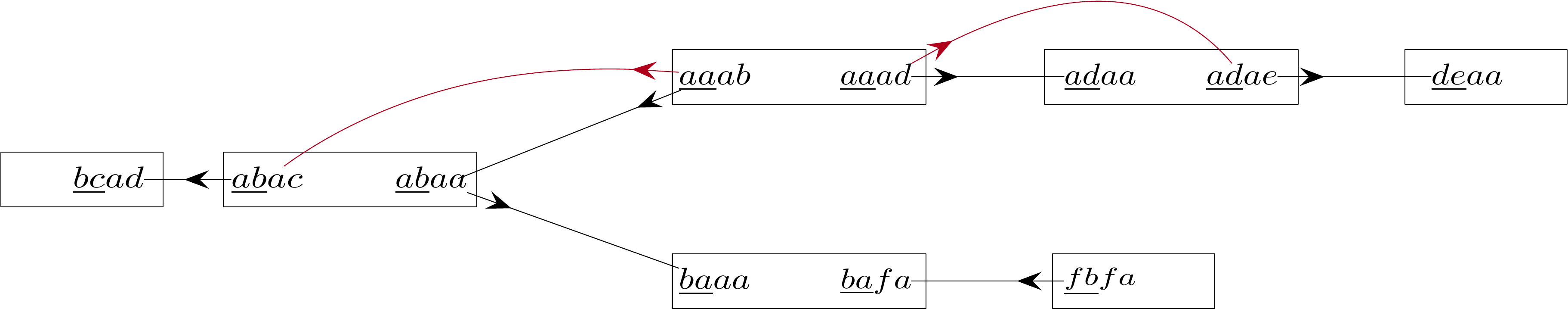}
    \caption{An instance of a \tripath for the query $q_2 = R(\underline{xu}~xy) \land R(\underline{uy}~xz)$. For the center of the given \tripath, $g(R(\underline{ab}aa))= \{a\}$. The facts in  the root and leaf blocks do not contain $a$ as a part of key. Note that there are extra solutions (in red) that are not enforced by the tripath. Hence this is not a nice-\tripath.}\label{figure-tripath-notnice}
  \end{subfigure}
  
  \bigskip
  \begin{subfigure}[t]{\linewidth}
    \centering\includegraphics[scale=0.2]{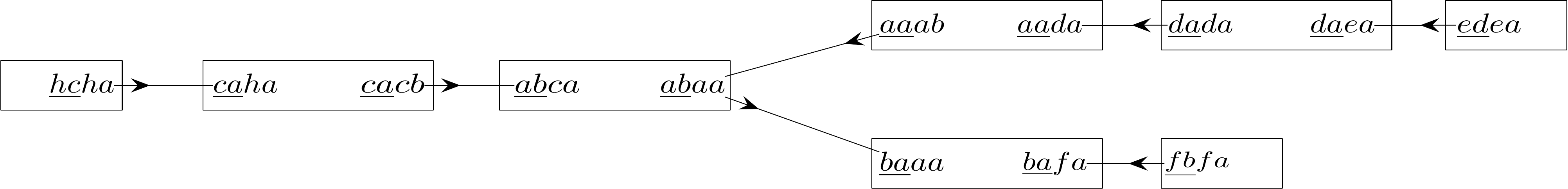}
    \caption{An instance of a nice-\tripath for the query $q_2 = R(\underline{xu}~xy) \land R(\underline{uy}~xz)$. We have the same center as in the previous case and $a$ does not belong to the key of the facts in root and leaf blocks.  Note that there are no extra solutions other than those enforced by the \tripath.}\label{figure-tripath-nice}
  \end{subfigure}
  \caption{Tripath illustrations }\label{figure-tripath}
  \Description{Tripath illustrations}
\end{figure*}

\begin{itemize}
\item There is exactly one block called the root block where the parent
  function $s$ is not defined and exactly two blocks, the leaf blocks, that
  have no children. Hence there is a unique block in $\Theta$, called the branching
  block, having two children.

\item Let $B$ be a block of $\Theta$. If $B$ is the root block, it contains exactly
  one fact denoted by $a(B)$. If $B$ is one of the leaf blocks then $B$ contains
  exactly one fact denoted by $b(B)$. In all other cases, $B$ contains exactly two
  facts denoted by $a(B)$ and $b(B)$.
  
\item Assume $B=s(B')$. Then $\Theta \models q\set{a(B)~b(B')}$.
  In particular, for
  the branching block $B$ we have $e= a(B)$ which is a branching fact with $d = b(B')$ and
  $f = b(B'')$, where $B'$ and $B''$ are the two blocks whose parent is the branching block $B$.
 We call the triple $def$ as the {\bf center} of the \tripath $\Theta$.

\item Let $B_0$, $B_1$, $B_2$ be respectively the root and leaves of $\Theta$ and
let $u_0=a(B_0)$, $u_1=b(B_1)$ and $u_2=b(B_2)$. Let $B$ be the center block of $\Theta$ and
$e=a(B)$. Then $\g(e) \not\subseteq \key(u_0)$ and $\g(e) \not\subseteq
\key(u_1)$ and $\g(e) \not\subseteq \key(u_2)$.
\end{itemize}

We say that a database $D$ contains a \tripath of $q$ if there exists
$\Theta\subseteq D$ such that $\Theta$ is a \tripath. A query $q$ admits a
\tripath if there is a database instance $D$ of $q$ that contains a \tripath.

A \tripath $\Theta$ is called a fork-\tripath if the center facts $def$ of
$\Theta$ forms a fork. If $def$ forms a triangle then $\Theta$ is called a
triangle-\tripath. 

The existence (or absence) of \tripath turns out to be the key in determining
the complexity of the consistent query answering problem of the \twowaydet
queries.

Notice that in the definition of \tripath we require the existence of some
solutions to $q$ (namely $q\set{a(B)b(B')}$ where $B$ is the parent block of $B'$) but we do not forbid the presence of other extra solutions. In order to
use the \tripath as a gadget for our lower bounds we need to remove those extra
solutions. To this end we introduce a normal
form for a \tripath that in particular requires no extra solutions, and show that
if a \tripath exists then there exists one in normal form.

For a \tripath $\Theta$, let  $B_0$, $B_1$, $B_2$ be respectively the root and leaves of $\Theta$ and
let $u_0=a(B_0)$, $u_1=b(B_1)$ and $u_2=b(B_2)$. We say that $\Theta$ is {\bf variable-nice} if
there exists $x\in \key(d), y\in \key(e)$ and $z\in \key(f)$ such that
$\set{x,y,z} \cap \big(\key(u_0) \cup \key(u_1) \cup
\key(u_2)\big)=\emptyset$. We say that a \tripath is
{\bf solution-nice} if
$q(\Theta)\subseteq \set{\set{ab} ~|~ a=a(B_i), b=b(B_j), s(B_i)=B_j} \cup \set{ \{fd\} }$.

The variable-nice property identifies three elements, each one from the key of the center facts $def$, such that the facts in the root and the leaf blocks do not contain these variables. These variables will be used in the encoding for proving \conp-hardness. The solution-nice property ensures that $q$ holds in $\Theta$ only where it must hold by
definition of being a \tripath, but nowhere else, with the only exception of
possibly $(fd)$, in which case $\Theta$ is a triangle-\tripath.

We say that a \tripath $\Theta$ is {\bf nice} if  the following holds:
\begin{itemize}
\item $\Theta$ is variable-nice
\item $\Theta$ is solution-nice
\item At least one of the
elements of $x,y,z$ (from being variable-nice), appears in the key of all facts
except  $u_0$, $u_1$ and $u_2$.
\item  Each of the keys of $u_0$,
$u_1$ and $u_2$ contains an element that does not occur in the key for any other
facts in $\Theta$. 
\end{itemize}

For instance the \tripath for $q_2$ depicted in \Cref{figure-tripath-notnice} is
not nice since it contains some extra solutions. However \Cref{figure-tripath-nice} depicts a nice \tripath for the
same query $q_2$. It turns out that niceness can be assumed without loss of
generality:

\begin{proposition}\label{prop-nice-tripath}
  Let $q$ be a \twowaydet query. If $q$ admits a fork-\tripath
  (triangle-\tripath) then $q$ admits a nice fork-\tripath
  (triangle-\tripath).
\end{proposition}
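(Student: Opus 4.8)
The plan is to replace an arbitrary (fork- or triangle-) \tripath $\Theta$ by its \emph{most general re-instantiation} and then to further adjust the fresh elements so as to meet the key-purity requirements. Concretely, I would regard the facts of $\Theta$ as a pattern and define an equivalence relation $\equiv$ on the set of all positions of all its facts, generated by two kinds of constraints: for two facts lying in the same block I equate their $i$-th positions for every key position $i$ (they are key-equal); and for every \emph{required} solution --- the tree edges $q\set{a(B)b(B')}$ with $B=s(B')$, oriented as $q(de)$ and $q(ef)$ at the center, and, when $\Theta$ is a triangle-\tripath, the extra edge $q(fd)$ --- I equate the positions identified by the homomorphism(s) witnessing that solution in $\Theta$. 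Assigning a distinct fresh element to each $\equiv$-class yields a database $\Theta'$. By construction $\Theta'$ keeps the same block/tree structure, the same center $def$, and exactly the same required solutions, so it is again a \tripath of the same type; moreover $\Theta'$ is the \emph{universal} instance for these constraints, so a pair of facts forms a solution in $\Theta'$ if and only if that solution is a logical consequence of the required ones.

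The heart of the argument is to show that $\Theta'$ is solution-nice, i.e.\ that the required solutions entail no others (apart from $q(fd)$, which is permitted). A putative extra solution $q(st)$ would force $s$ and $t$ to agree on every position glued by the join between $A$ and $B$; by universality such agreements can only arise from common $\equiv$-classes, and since $\equiv$ is propagated along a \emph{tree} (with at most the single extra center edge $fd$ in the triangle case) these classes are ``local''. The \twowaydet hypothesis is exactly what rules out the remaining candidates: by \Cref{lemma-key inclusion consequence} every fact lies in at most two solutions and, when branching, only in the $q(de)/q(ef)$ shape; together with the incomparability $\key(A)\not\subseteq\key(B)$, $\key(B)\not\subseteq\key(A)$, this excludes both the within-block pairs $a(B)b(B)$ and the long-distance pairs that the tree of classes might otherwise permit. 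Carrying out this case distinction --- over the tree positions of $s,t$ and over the direction of the hypothetical match --- is the main obstacle and the most technical part of the proof.

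It then remains to establish the $\g(e)$-condition together with variable-niceness. Since re-instantiation only \emph{splits} elements apart, every non-inclusion of $\Theta$ survives in $\Theta'$; checking that the key-inclusions among $d,e,f$ defining $\bg(e)$ are precisely those forced by the center edges shows that $\bg(e)$ is computed as before, and as $\g(e)\subseteq\key(e)$ with $\key(e)$ now disjoint from each $\key(u_i)$ in the generic instance, the condition $\g(e)\not\subseteq\key(u_i)$ holds for $i=0,1,2$. Variable-niceness is then immediate: picking any $x\in\key(d)$, $y\in\key(e)$, $z\in\key(f)$, their $\equiv$-classes are generically disjoint from those contributing to $\key(u_0)\cup\key(u_1)\cup\key(u_2)$, so $\set{x,y,z}$ avoids these keys.

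Finally I would arrange the last two purity conditions using the freedom that $\equiv$ leaves in positions not pinned by a required solution. For the private key-elements of $u_0,u_1,u_2$ I would set, in the key of each of these three facts, one position not fixed by its unique incident solution edge to a brand-new element, which then occurs in no other key. For the common element I would propagate a single fixed element through a key position of every non-root, non-leaf fact along the spine; this is consistent with the witnessing homomorphisms because the inclusions $\key(A)\subseteq\vars(B)$ and $\key(B)\subseteq\vars(A)$ already transport key elements between adjacent facts. The only thing to re-verify is that these assignments do not resurrect extra solutions, which follows from the same genericity argument as in the solution-nice step; and the fork/triangle distinction is untouched, since the center edges (and $q(fd)$ in the triangle case) are imposed identically in $\Theta'$.
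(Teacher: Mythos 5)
Your construction keeps the tree shape of $\Theta$ fixed and only re-instantiates its elements generically, but niceness is in general \emph{not} achievable by re-instantiation alone, so there is a real gap here, not just a missing case analysis. Consider a minimal \tripath whose branching block has the two leaf blocks as its children, so that the center facts satisfy $d=u_1$ and $f=u_2$. Such tripaths exist: for $q_6=R(\underline{x}~yz)\land R(\underline{z}~xy)$ take a root block, a branching block, and two leaves (with $d=R(\underline{a}~bc)$, $e=R(\underline{c}~ab)$, $f=R(\underline{b}~ca)$ one checks all \tripath conditions hold). For any instantiation of this shape, variable-niceness demands some $x\in\key(d)$ with $x\notin\key(u_0)\cup\key(u_1)\cup\key(u_2)$, which is impossible since $\key(d)=\key(u_1)$. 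Hence no re-instantiation, generic or otherwise, can be variable-nice, and any correct proof must change the structure of the \tripath. This is precisely what the paper's proof does and yours does not: it achieves variable-niceness by \emph{extending the branches} of the \tripath, and it achieves solution-niceness by an induction on the number of extra solutions in which facts are replaced and \emph{new blocks are added}.

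Beyond this, several of your ``genericity'' claims are unsound even for non-minimal shapes. Since $q$ is \twowaydet we have $\key(A)\subseteq\vars(B)$ and $\key(B)\subseteq\vars(A)$, so in \emph{every} instance, including the most general one, the two facts of a solution are forced to share key elements (for $q_2=R(\underline{xu}~xy)\land R(\underline{uy}~xz)$, a solution $q(st)$ via $\mu$ forces $\mu(u)\in\key(s)\cap\key(t)$); these forced elements propagate along tree edges, so when branches are short the keys of $d,e,f$ need not be ``generically disjoint'' from $\key(u_0)\cup\key(u_1)\cup\key(u_2)$. Similarly, the claim that $\bg(e)$ ``is computed as before'' fails: the key inclusions among $d,e,f$ in $\Theta$ may be accidental rather than forced, so the applicable case in the definition of $\bg(e)$ can differ between $\Theta$ and $\Theta'$, and the condition $\g(e)\not\subseteq\key(u_i)$ must be re-derived rather than inherited. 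Finally, your last step gluing a single common element into the key of every non-root, non-leaf fact is the opposite of genericization (it merges classes), so ``the same genericity argument'' cannot certify that it creates no extra solutions or does not break the required ones. The universal-instance idea is a sensible starting point for removing \emph{accidental} extra solutions, but without the branch-extension and block-addition surgery of the paper's proof it does not establish the proposition.
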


\begin{proof}[proof sketch]
Variable-niceness is achieved essentially by extending the branches of the \tripath  depending on how $\g(e)$ is defined. The construction of a solution-nice \tripath is more involved and is
done by induction on the number of extra solutions. Typically, if $q(\alpha\beta)$ is an extra solution we will
replace the fact $\alpha$ so that this extra solution is removed and add new blocks to the \tripath so that all other properties are satisfied. This can only work if
$\alpha$ is not part of the center of the \tripath. When $\alpha$ is part of
the center, it turns out that $\beta$ can not be part of the center. We then
argue by symmetry using the block of $\beta$. The last two conditions are again simple to achieve.
\end{proof}

We will show in \Cref{sec-tripath or chase} that if a query $q$ does not admit
a \tripath then $\certain(q)$ can be solved in polynomial time using the greedy
fixpoint algorithm of \Cref{sec-cert-algo}. If a query $q$ admits a
fork-\tripath we will show in \Cref{sec-fork tripath and conp} that
$\certain(q)$ is \conp-complete. If a query $q$ does not admit a fork-\tripath
but admits a triangle-\tripath we will
show in \Cref{sec-triangle tripath} that $\certain(q)$ can be solved in
polynomial time, using a combination of the fixpoint algorithm of
\Cref{sec-cert-algo} and bipartite matching.

\section{Queries with no \tripath and \ptime}\label{sec-tripath or chase}
The main goal of this section is to prove that for every \twowaydet query $q$,
if $q$ does not admit a \tripath then $\certain(q)$ is in \ptime.  There are
many \twowaydet queries that have no \tripath. For instance the query
$q_5=R(\underline{x}~yx)R(\underline{y}~xu)$ does not admit a \tripath because
any three facts $d,e,f$ such that $q_5(de)\land q_5(ef)$ holds are such that
two of them must be in the same block. This implies that we cannot have a center
for the \tripath, so $q_5$ does not admit a \tripath. Hence $\certain(q_5)$ is
in \ptime which follows from the next theorem.

\begin{theorem}\label{corollary-no tripath implies cqk}
  Let $q$ be a \twowaydet query. If $q$ does not admit a \tripath
  then $\certain(q)$ is in \ptime.
\end{theorem}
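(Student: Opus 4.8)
The plan is to prove the stronger equality $\certain(q)=\cert_k(q)$ for a fixed constant $k$ (a suitable $k$ suffices; one may take $k=3$, matching the three facts $def$ of a \tripath's center), which gives membership in \ptime by the $O(n^k)$ bound of \Cref{sec-cert-algo}. Since $\Cqk$ is always sound, only the inclusion $\certain(q)\subseteq\cert_3(q)$ needs work, and I would obtain it by reproducing the architecture of the proof of \Cref{thm-hp-X+}. Concretely, the whole argument reduces to an analogue of \Cref{lemma-hp-X+}: for every database $D$ and every repair $r$ with $r\models q(ab)$, either $\set{a}\in\Delta_3(q,D)$, or there is a repair $s$ of $D$ with $q(s)\subsetneq q(r)$. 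Granting this reduction lemma, the rest is verbatim the end of \Cref{thm-hp-X+}: taking for $r$ a repair minimizing $q(r)$ shows that every repair of a certain $D$ carries a forced fact $a$ with $\set{a}\in\Delta_3(q,D)$, and the exchange argument on a repair minimizing the number of such forced facts then forces an entire block to be forced, i.e. $\emptyset\in\Delta_3(q,D)$.

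The main work is the reduction lemma, and this is where the absence of a \tripath enters. Assume $\set{a}\notin\Delta_3(q,D)$ (otherwise we are done). As in \Cref{lemma-hp-X+} this yields a key-equal $b'\sim b$ whose replacement is not forced, and I would pass to $r'=r[b\to b']$. Unlike the zig-zag case, flipping may create new solutions, but by construction every new solution uses $b'$, and by \Cref{lemma-key inclusion consequence} a fact lies in at most two solutions of a repair, of the shape $q(de)\land q(ef)$; hence the solution structure of any repair is a disjoint union of directed paths and cycles, and $b'$ re-enters this structure in a very controlled way. I would then iterate, following the chain of freshly created solutions away from $b'$ and flipping the offending facts, maintaining the invariant that only the last touched block can carry a new solution. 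This is the \emph{chase} announced in the name of the section.

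The crux, and the expected main obstacle, is to show that this chase terminates with a genuine reduction $q(s)\subsetneq q(r)$ rather than merely relocating solutions. I would argue by contradiction: if no flip ever decreases the number of solutions, then whenever a flipped fact acquires both an incoming and an outgoing solution it becomes a branching fact $e$, and the facts visited by the chase — the up-direction from $e$ together with the two directions in which $e$ branches — assemble into a rooted tree of blocks of exactly the shape defining a \tripath, with center $def$, root fact $u_0$, and the two leaf facts $u_1,u_2$ reached by the two branches. The point is that the \emph{only} reason a branch cannot be cut locally is that the relevant key $\g(e)$ is not covered at the endpoint reached, i.e. $\g(e)\not\subseteq\key(u_i)$ for $i=0,1,2$; conversely, as soon as $\g(e)\subseteq\key(u_i)$ at some endpoint, the chase can cut there and reduce. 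The case analysis behind $\g(e)$ in the definition of \tripath is engineered precisely to make this dichotomy hold, using the \twowaydet hypotheses $\key(A)\subseteq\vars(B)$ and $\key(B)\subseteq\vars(A)$. Since $D$ is finite, a never-reducing chase must eventually fold into such a tree, producing a \tripath and contradicting the hypothesis that $q$ admits none; a purely path-like or cyclic chase, having no branching center, is never a \tripath and must therefore reduce.

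Assembling these pieces, the reduction lemma holds, and the minimality and exchange arguments imported from \Cref{thm-hp-X+} then give $\emptyset\in\Delta_3(q,D)$ whenever $D\models\certain(q)$. Hence $\certain(q)=\cert_3(q)$, computable in time $O(n^3)$, so $\certain(q)$ is in \ptime. I expect the delicate points to lie entirely in the chase bookkeeping: checking that each flip creates solutions only in the last block, identifying which visited facts play the roles of root and of the two leaves, and verifying that these three endpoints genuinely satisfy the three noninclusions $\g(e)\not\subseteq\key(u_i)$ demanded of a \tripath — so that a non-reducing chase is forced to exhibit the forbidden gadget rather than a harmless path or cycle.
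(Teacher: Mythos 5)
Your overall strategy is the paper's own: prove the stronger statement $\certain(q)=\Cqk$ for a fixed $k$ via a ``reduce-or-force'' lemma in the style of \Cref{lemma-hp-X+}, run a chase that flips facts obtained from non-membership in $\Delta_k(q,D)$, and extract a \tripath from any chase that never reduces the number of solutions. But there is a genuine gap exactly where the paper has to work hardest: the bookkeeping. Your invariant ``only the last touched block can carry a new solution'' cannot be maintained with singleton (or size-$3$) sets. The only mechanism preventing new solutions is non-membership in $\Delta_k$: if $S\notin\Delta_k(q,D)$, then for each block there is a replacement fact $u$ such that no subset of $S\cup\set{u}$ lies in $\Delta_k(q,D)$, hence $u$ forms no solution with any fact \emph{of $S$} --- and with no other fact is anything guaranteed. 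So when you flip a fact deep in the chase, its replacement is protected only against the one or two facts you are still tracking; nothing prevents it from forming a solution with a fact selected several steps earlier, at which point the claim $q(s)\subsetneq q(r)$ breaks. The zig-zag property is precisely what makes one-fact memory sufficient in \Cref{thm-hp-X+}; for general \twowaydet queries it fails, which is why the paper's \Cref{lem-k-minimal} carries along a whole set $K$ containing $\rkey(a,r)$ (all facts of the current repair whose key is included in $\key(a)$ --- already up to $\kappa=l^l$ facts, exceeding $3$ as soon as $l\ge 2$) and maintains non-membership of this growing memory in $\Delta_k(q,D)$. Bounding the memory is then the hard part: the paper requires key inclusion between consecutively selected facts, raises a flag when this fails, extracts a \tripath when two flags are raised, and only then obtains the bound $k=2^{2\kappa+1}+\kappa-1$. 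Your choice $k=3$, motivated by the three center facts $def$, misidentifies what $k$ measures: it is the size of the chase's memory, not the size of a \tripath's center.

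Two further points. First, your final assembly needs the reduction lemma to return a repair that still \emph{contains} the protected set (the paper concludes ``$K\subseteq r'$ and $q(r')\subsetneq q(r)$''), because the proof of \Cref{lemma-k minimal or tripath} reapplies the lemma to the same set $K'$ inside the successive repairs; with singletons this requirement is invisible, and your exchange argument borrowed from \Cref{thm-hp-X+} would have to be redone at the level of $\rkey$-sets (the paper instead shows that every $K'\sim\rkey(a,r)$ lies in $\Delta_k(q,D)$ and then peels off blocks to reach $\emptyset\in\Delta_k(q,D)$). Second, your claim that ``the only reason a branch cannot be cut locally is that $\g(e)\not\subseteq\key(u_i)$'', and that a non-branching (path-like or cyclic) chase always reduces, are assertions, not arguments; in the paper the \tripath is produced by the two-flag mechanism, and the possibility of the chase returning to previously visited blocks (your ``cyclic'' case) is excluded only thanks to the large memory, not by any local analysis. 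With the paper's $\rkey$-based sets and its value of $k$ substituted for your singletons and $k=3$, your outline essentially becomes the paper's proof; as written, it does not go through.
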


In fact we show that $\certain(q)$ can be solved using the greedy fixpoint
algorithm $\Cqk$ defined in \Cref{sec-cert-algo} for $k=2^{2\kappa+1}+\kappa -
1$ where $\kappa = l^l$ (recall that $l$ is the number of key positions in the relation
$R$ under consideration)\footnote{Note that the constant $k$ directly results from the proof technique and is not intended to be optimal.}.

\begin{proposition}\label{lemma-k minimal or tripath}
  Let $q$ be a \twowaydet query and let $k = 
  2^{2\kappa+1}+\kappa - 1$ and let $D$ be a database. If
  $D$ does not admit a \tripath of $q$ then  $D\in\certain(q)$ iff $D\in \Cqk$.
\end{proposition}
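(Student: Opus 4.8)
The statement is an equivalence and one direction is immediate: $\Cqk$ is \emph{sound}. Indeed, the invariant maintained by the fixpoint (\Cref{sec-cert-algo}) guarantees that once $\emptyset\in\Delta_k(q,D)$, every repair of $D$ satisfies $q$; hence $D\in\Cqk$ implies $D\in\certain(q)$ for every $D$, with no appeal to the \tripath hypothesis. The content of the proposition lies entirely in the converse, which I would prove in contrapositive form: assuming $\emptyset\notin\Delta_k(q,D)$ and that $D$ contains no \tripath of $q$, I would exhibit a repair $r$ of $D$ with $r\not\models q$. Call a partial repair $P$ (a subset of $D$ with at most one fact per block) \emph{avoiding} if no subset of $P$ belongs to $\Delta_k(q,D)$. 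Two remarks drive the argument. First, since $\emptyset\notin\Delta_k(q,D)$, the empty set is avoiding, so an avoiding partial repair exists. Second, because $\Delta_k(q,D)$ initially contains every $k$-set satisfying $q$ and $k\ge 2$, any solution $\{a,b\}$ already lies in $\Delta_k(q,D)$; thus an avoiding $P$ contains no solution, and an avoiding partial repair that happens to be total is exactly a repair falsifying $q$.

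So I would fix a $\subseteq$-maximal avoiding partial repair $P$ and argue it must be total. Suppose not, and let $B$ be an uncovered block. By maximality, for every fact $u\in B$ the set $P\cup\{u\}$ is not avoiding, so it contains some $S_u\in\Delta_k(q,D)$; since $P$ itself is avoiding we must have $u\in S_u$ and $S_u\setminus\{u\}\subseteq P$. Now observe that these witnesses almost trigger a fixpoint step on the block $B$: if the union $U=\bigcup_{u\in B}(S_u\setminus\{u\})$ had size at most $k$, then $U$ would be a $k$-subset of $P$ with $S_u\subseteq U\cup\{u\}$ for all $u\in B$, and the inductive rule defining $\Delta_k$ would add $U$ to $\Delta_k(q,D)$, contradicting the fact that $P$ is avoiding. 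Hence, whatever witnesses we choose, their cores cannot be packed into a single $k$-set: the forcing at $B$ is genuinely spread out.

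It is this irreducible spreading that I would convert into a \tripath, using \Cref{lemma-key inclusion consequence} as the structural backbone. By that lemma the solutions inside any repair form disjoint directed paths and cycles, so the forcing recorded by each $S_u$ propagates along such paths; the failure of a common $k$-core means that two of these forcing threads diverge and cannot be merged, which yields a fact that is \emph{branching} in the sense of the lemma, i.e. a center $def$ with $q(de)\land q(ef)$. Following the two diverging threads to their base cases produces the two leaves of the tree, the part of $P$ lying above the branching block supplies the root, and the $\bg(e)$ case analysis lets me read off $\g(e)$ from the key-inclusions among $d,e,f$. The three non-inclusion requirements $\g(e)\not\subseteq\key(u_0)$, $\g(e)\not\subseteq\key(u_1)$, $\g(e)\not\subseteq\key(u_2)$ at the root and the two leaves are precisely the statements that the forcing could not be short-circuited at those endpoints — had $\g(e)$ been contained in one of these keys, that thread could have been rerouted and merged, contradicting irreducibility. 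Assembling the center, the two descending arms and the root gives a \tripath of $q$ inside $D$, contradicting the hypothesis. Therefore no uncovered block exists, $P$ is total, and $P$ is the desired repair falsifying $q$.

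The single place where the exact value $k=2^{2\kappa+1}+\kappa-1$ (with $\kappa=l^l$) is needed is in guaranteeing that the arms extracted above are short enough to live inside $k$-sets while still closing into a legitimate \tripath. I would control their length by a pigeonhole over the \emph{profile} of the facts along an arm, where a profile records the key-equality pattern of the fact — of which there are at most $\kappa=l^l$ — together with the bounded information describing how $\g(e)$ meets its key on each side; a determinization of this data over the two directions bounds the number of distinct profiles by $2^{2\kappa+1}$, with the additive $\kappa-1$ absorbing the prefix needed before the pattern stabilizes. When a profile repeats along an arm, one of two things happens: either the repeated segment can be excised, shortening the arm and lowering the size of $U$ (so that a common $k$-core reappears, giving the contradiction of the previous paragraph), or the repetition is exactly what certifies a genuine leaf with the required $\g(e)$-non-inclusion. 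I expect this pigeonhole-and-closure step to be the principal obstacle: it is where one must simultaneously honour \emph{all} the defining conditions of a \tripath — the tree shape with a single branching block, the $\bg$-conditions at the center, and the three non-inclusions at $u_0,u_1,u_2$ — and match them exactly against what the $k$-bounded fixpoint can and cannot derive.
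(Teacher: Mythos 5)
Your setup is correct and is a genuinely different framing from the paper's: the soundness direction is the same triviality, and your greedy construction is valid as far as it goes --- a $\subseteq$-maximal \emph{avoiding} partial repair $P$ exists, contains no solution, and if some uncovered block $B$ had witnesses whose cores $S_u\setminus\{u\}$ packed into a single $k$-set $U\subseteq P$, the inductive rule of \Cref{sec-cert-algo} would indeed place $U$ in $\Delta_k(q,D)$, contradicting avoidance. The paper routes differently: it takes a repair $r$ with a \emph{minimal number of solutions} and proves a solution-decreasing lemma (\Cref{lem-k-minimal}) stating that, absent a \tripath, any $K\supseteq\rkey(a,r)$ with $|K|\le k$ not in $\Delta_k(q,D)$ admits a repair $r'\supseteq K$ with $q(r')\subsetneq q(r)$; iterating this contradicts minimality. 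Both decompositions reduce the proposition to one hard kernel, so the difference in framing is real but not where the substance lies.

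The genuine gap is your conversion of ``the forcing at $B$ is genuinely spread out'' into a \tripath; this is exactly where the entire technical content of the proposition lives, and your sketch of it does not hold up. First, \Cref{lemma-key inclusion consequence} constrains solutions \emph{inside a repair} to form paths and cycles, but your avoiding set $P$ contains no solutions at all, and the witnesses $S_u\in\Delta_k(q,D)$ are not solutions: membership in $\Delta_k$ is certified by a derivation tree of fixpoint steps (each step quantifying over every fact of some block), not by a chain of solutions. So there are no ``forcing threads along paths'' to follow, and no branching fact $d,e,f$ with $q(de)\wedge q(ef)$ falls out of the packing failure. Second, the defining conditions of a \tripath --- blocks of size at most two arranged as a tree with exactly two leaves, and above all the three non-inclusions $\g(e)\not\subseteq\key(u_0)$, $\g(e)\not\subseteq\key(u_1)$, $\g(e)\not\subseteq\key(u_2)$ --- are asserted by you to ``be precisely'' the irreducibility of the forcing, with no argument; nothing in your construction even identifies $\g(e)$. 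Third, the pigeonhole yielding $k=2^{2\kappa+1}+\kappa-1$ is named but never performed. The paper handles all three difficulties inside the proof of \Cref{lem-k-minimal}: an explicit repair-rewriting process that replaces $b$ by the fact $b'$ witnessing non-membership in $\Delta_k(q,D)$, maintains a bounded memory of facts that could create new solutions, enforces key inclusions between consecutively selected facts, and extracts a \tripath exactly when this discipline fails twice (the ``two flags''); the constant $k$ is the bound on that memory. Your proposal would need an analogue of this entire mechanism; as written, ``failure to pack implies a \tripath'' is a restatement of what must be proven, not a proof of it.
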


For any database $D$ and repair $r$ of $D$ and $a\in r$ let
$\rkey(a,r) = \{ c\mid c\in r$ and $\key(c)\subseteq \key(a)\}$. 
To prove \Cref{lemma-k minimal or tripath} we build on the following lemma
which resembles \Cref{lemma-hp-X+} but requires a more involved technical proof.

\begin{lemma}\label{lem-k-minimal}
  Let $k\ge \kappa$ and $q=AB$ be a query that is \twowaydet. Let $D$ be a
  database that does not admit a \tripath. Then for every repair $r$ of $D$
  such that $r\models q\set{ab}$, and for all $K\subseteq r$ such that
  $\rkey(a) \subseteq K$ and $|K| \le k$, one of the following conditions hold:
\begin{enumerate}
\item $K\in \Delta_k(q,D)$
\item There exists a repair $r'$ such that $K\subseteq r'$ and $q(r') \subsetneq q(r)$.
\end{enumerate}

\end{lemma}

\begin{proof}[Proof sketch]

 Assume that $D$
does not admit a \tripath. Consider  a repair $r$ of $D$, two
facts $a,b$ of $r$ such that $r\models q\set{ab}$ and let $K$ be a set of facts
containing $\rkey(a,r)$. We need to show that if $K$ is not in $\Delta_k(q,D)$
then there is a new repair with strictly less solutions.  We therefore need to
remove at least one solution from $r$ and we have an obvious candidate as
$r \models q\set{ab}$. We then use the definition of the algorithm $\cert_k$
and from the fact that $K\not\in\Delta_k(q,D)$ we know that in the block of $b$
there is a fact $b'$ that can not form a $k$-set when combined with facts from
$K$. In particular $b'$ and $a$ do not form a solution to $q$.  Let $r' = r[b\to b']$.

Now if $b'$ is not a part of any solution in $r'$ then $r'$ is the desired
repair. Otherwise we can repeat the above argument with the facts that are
making $q$ true when combined with $b'$. There are at most two such facts and
we need to consider them both, one after the other. The goal is to repeat the
process above until all newly created solutions are removed from the working
repair. When doing so we visit blocks of $D$, selecting two facts in each such
block, the one that makes the query true with a previously selected fact, and
the one we obtain from the non-membership to $\Delta_k(q,D)$.  If we can
enforce that we never visit a block twice we are done because the database
being finite, eventually all the selected facts will not participate in a
solution to $q$ in the current repair.  In order to do this we keep in memory
(the set $K$ initially) all the facts of the current repair that can
potentially form a new solution. This ensures that we
will never make the query true with a fact in a previously visited block. The
difficulty is to ensure that the size of the memory remains bounded by
$k$. This is achieved by requiring key inclusion between two consecutively
selected facts, otherwise we stop and put a flag. If we get two flags we argue
that we can extract a \tripath which is a contradiction. If not we can show
that the memory remains bounded.
\end{proof}

We conclude this section with a sketch of the of proof of \Cref{lemma-k minimal
  or tripath}. We assume $D \models \certain(q)$ and show that $D \models \Cqk$. Let $r$ be a
  repair of $D$ that contains a minimal number of solutions. For any set
of facts $K\subseteq r$ and $K'\subseteq D$, denote $K'\sim K$ if there is a
bijection $f: K\to K'$ where $f(a) \sim a$ and let $r[K\to K']$ be the new
repair obtained by replacing the facts of $K$ in $r$ by the facts of $K'$. 
 
  Since $D\models \certain(q)$ there exists $a,b\in r$ such that
  $r\models q\set{ab}$. Let $K = \rkey(a,r)$, clearly $|K|\le \kappa$.  It
  suffices to show that for all $K'\sim K$, $K'\in \Delta_k(q,D)$.  If this is
  not the case for some $K'$, let $r' = r[K\to K']$.  As $r$ is minimal, we can
  assume there are facts $c\in K'$ and $d\in r'$ such that $q\set{cd}$.

  Notice that $\rkey(c,r') \subseteq K'$. As $K'\not\in \Delta_k(q,D)$, by
  \Cref{lem-k-minimal}, there exists a repair $r''$ such that $K'\subseteq r''$
  and $q(r'') \subsetneq q(r')$. Repeating this argument eventually yields a
  repair contradicting the minimality of $r$.


\section{Fork-\tripath and \conp-hardness}\label{sec-fork tripath and conp}
In this section we prove that if a query that is \twowaydet admits a fork-\tripath, then $\certain(q)$ is \conp-hard.  We have already
seen that the query $q_2= R(\underline{xu}~xy) \land R(\underline{uy}~xz)$
admits a fork-\tripath (associated fork-\tripath are depicted in
\Cref{figure-tripath} part (b) and (c)).  
The fact that $\certain(q_2)$ is \conp-hard is a consequence of the following result.

\begin{theorem}
\label{theorem-conp hardness}
Let $q$ be a query that is \twowaydet. If $q$ admits a
fork-\tripath, then \certain(q) is \conp-complete.
\end{theorem}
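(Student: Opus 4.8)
The plan is to establish \conp-hardness by a polynomial-time reduction from an NP-complete problem to the \emph{complement} of $\certain(q)$, namely the problem of deciding whether \emph{some} repair of a given database falsifies $q$. This complement is in NP (guess a repair, check in polynomial time that $q$ is false), so NP-hardness for it yields \conp-hardness for $\certain(q)$; together with the membership $\certain(q)\in\conp$ from \Cref{sec-prelims} this gives \conp-completeness. The entire construction will be driven by a single query-specific gadget: a nice fork-\tripath of $q$, which exists by \Cref{prop-nice-tripath}. Its niceness is exactly what makes it usable. Solution-niceness guarantees that the only solutions to $q$ inside the gadget are the prescribed tree edges, while the variable-nice elements $x\in\key(d)$, $y\in\key(e)$, $z\in\key(f)$ and the distinguished key elements of the three ports $u_0,u_1,u_2$ supply clean gluing points through which gadgets can be attached to the rest of the database without inadvertently creating new solutions.

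I would reduce from $3$-\textsc{Sat}. Given a formula $\phi$ with variables $p_1,\dots,p_n$ and clauses $C_1,\dots,C_m$ I build $D_\phi$ from two kinds of building blocks. For each propositional variable I introduce a \emph{choice block} containing two key-equal facts, one standing for \emph{true} and one for \emph{false}; the fact selected in a repair encodes a truth assignment. For each clause $C_j$ I take a fresh isomorphic copy of the nice fork-\tripath, renamed so that distinct copies are disjoint: this is where the backbone element of the \tripath (one of $x,y,z$ occurring in the key of every internal fact but in none of the ports) receives a per-clause fresh value, keeping clause gadgets independent. The three ports $u_0,u_1,u_2$ of this copy are identified, through their distinguished key elements, with the three literal facts of $C_j$, arranged so that a port fact survives in a repair precisely when the corresponding literal is \emph{false} (this handles positive and negative literals uniformly). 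The target is a gadget lemma: for every assignment, the clause copy admits a solution-free selection extending that assignment iff $C_j$ is satisfied; equivalently, if all three literals of $C_j$ are false then every completion of the repair inside the copy produces a solution to $q$. Granting this, a repair falsifies $q$ on all of $D_\phi$ iff the induced assignment satisfies every clause, so $\phi$ is satisfiable iff some repair falsifies $q$.

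The heart of the argument, and the step I expect to be hardest, is proving this gadget lemma and in particular pinning down why the \emph{fork} shape, rather than a triangle, forces a solution when all three ports are blocked. Intuitively the branching fact $e$ sits where the three arms meet, and breaking the solution pair $q(de)\wedge q(ef)$ at $e$ requires either deselecting $e$ or deselecting both $d$ and $f$; propagating this break-or-keep choice along each arm down to its port behaves like an implication chain, and the three chains are jointly satisfiable exactly when at least one arm's port is allowed to be ``open'', i.e.\ at least one literal is true. Because $q(fd)$ is \emph{absent} in a fork, the three arms do not close into a cycle, so the constraints form a genuine disjunction over the three literals rather than the bipartite-matching-style constraint a triangle would produce --- precisely the structural difference that keeps the triangle case (treated in \Cref{sec-triangle tripath}) in \ptime. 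Establishing this will require a careful case analysis following the five clauses in the definition of $\bg(e)$, since the key-inclusion pattern among $d,e,f$ governs how selections propagate along the arms and how the port identifications interact with the keys.

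Finally I would discharge the delicate verification that $D_\phi$ contains \emph{no} solutions to $q$ beyond the intended ones, in particular none straddling two clause gadgets or a clause gadget and a choice block. Here solution-niceness, the per-clause freshness of the backbone element, and the uniqueness of each port's distinguished key element combine: any hypothetical cross-gadget solution $q(\alpha\beta)$ would, through the \twowaydet key-inclusion constraints of \Cref{lemma-key inclusion consequence}, force $\alpha$ and $\beta$ to share key elements that by construction occur in a single gadget only, a contradiction. Once cross-gadget solutions are excluded, global solution-freeness decomposes into per-gadget solution-freeness, and the gadget lemma completes the proof.
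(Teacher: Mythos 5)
Your overall skeleton (reduce from SAT, use a nice fork-\tripath as the gadget, invoke \Cref{prop-nice-tripath}, and use variable-niceness and solution-niceness to glue copies together without creating stray solutions) matches the paper, but the core of your reduction --- the clause gadget and its ``gadget lemma'' --- is wrong, and it is exactly the step you flagged as the heart of the argument. Analyse which selections of a nice fork-\tripath $\Theta$ are solution-free. The only solutions are $q\set{a(B)\,b(B')}$ for parent--child block pairs, so a selection is solution-free iff no block picks its $a$-fact while a child picks its $b$-fact. Since the root block contains only the $a$-fact $u_0$ and the leaves only $b$-facts, if $u_0$ is present then every block below it is forced to pick its $a$-fact, down \emph{both} arms of the fork, so both leaf ports $u_1,u_2$ must be absent; conversely, if $u_0$ is absent, picking the $b$-fact in every internal block is solution-free (fork-ness, i.e.\ the absence of $q(fd)$, is precisely what makes the all-$b$ selection legal), and this is compatible with $u_1,u_2$ being present or absent arbitrarily. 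Hence the gadget's falsifiability condition is $\neg u_0 \vee (\neg u_1 \wedge \neg u_2)$, not ``at least one port absent''. Under your convention ``port present iff literal false'', one \tripath per clause therefore encodes $l_0 \vee (l_1\wedge l_2) \equiv (l_0\vee l_1)\wedge(l_0\vee l_2)$, where $l_0$ is the literal placed at the root --- a conjunction of two 2-clauses, not the 3-clause $l_0\vee l_1\vee l_2$. Your gadget lemma fails concretely: if $l_0$ is false and exactly one of $l_1,l_2$ is true, the clause is satisfied, yet every completion inside the copy contains a solution, so satisfiable formulas get mapped to databases on which $q$ is certain. No reassignment of literals to ports can salvage this: the three ports of a fork-\tripath are inherently asymmetric (root versus leaves), and what your construction actually reduces from is a 2-SAT-like problem, which is in \ptime, so it cannot establish \conp-hardness.

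The paper's construction inverts the roles. It uses one \tripath copy per \emph{literal occurrence}, not per clause, and reduces from 3-SAT with at most three occurrences per variable. The clause constraint is enforced by the block structure itself: the roots of the copies for the literals of a clause $C$ are made key-equal, so a repair picks exactly one root per clause --- the ``witness'' literal. The \tripath then implements only an implication: by the propagation above, picking the root of the copy for occurrence $l$ in $C$ forces both of its leaves out, and those leaves are shared (pairwise --- which is why two leaves suffice under the bounded-occurrence assumption) with the leaves of the copies for the complementary occurrences of the same variable, which forces the roots of those copies out, i.e.\ $\bar l$ cannot be the witness of any clause. Falsifying repairs thus correspond exactly to consistent choices of one witness per clause, i.e.\ to satisfying assignments (\Cref{lemma-conp proof}). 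If you want to keep a ``gadget'' view, the lesson is that a fork-\tripath is an implication gadget (``root in $\Rightarrow$ both leaves out''); the disjunction over a clause must come from key-equality of the roots, not from the tree shape.
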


The remaining part of this section is a proof of \Cref{theorem-conp hardness}.
In view of Proposition~\ref{prop-nice-tripath} we can assume that $q$ has a
nice fork-\tripath $\Theta$. Let $x,y,z$ be the elements of $\Theta$ witnessing the
variable-niceness of $\Theta$ and let $u,v,w$ be the fresh new elements occurring only
in the keys of the head and tails of $\Theta$. Note that $x,y,z$ need not be distinct. For any elements
$\alpha_x,\alpha_y,\alpha_z,\alpha_u,\alpha_v\alpha_w$, we denote by
$\Theta[\alpha_x,\alpha_y,\alpha_z,\alpha_u,\alpha_v,\alpha_w]$ the database constructed from $\Theta$ by
replacing each of $x,y,z,u,v,w$ by $\alpha_x,\alpha_y,\alpha_z,\alpha_u,\alpha_v,\alpha_w$ respectively, where $\alpha_x = \alpha_y$ iff $x=y$; $\alpha_y = \alpha_z$  iff $y=z$ and so on.

We use a reduction from $3$-SAT where every variable occurs at most
$3$ times.  Let $\phi$ be such a formula. Let $V_2$ be the variables of $\phi$
that occur exactly two times and $V_3$ be the variables of $\phi$ that occur exactly three times. Without loss of generality we can assume that each variable $p$ of
$\phi$ occur at least once positively and at least once negatively.
The construction is illustrated in \Cref{fig:lower-bound}.

\begin{figure*}[!ht]
  \centering\includegraphics[width=\linewidth]{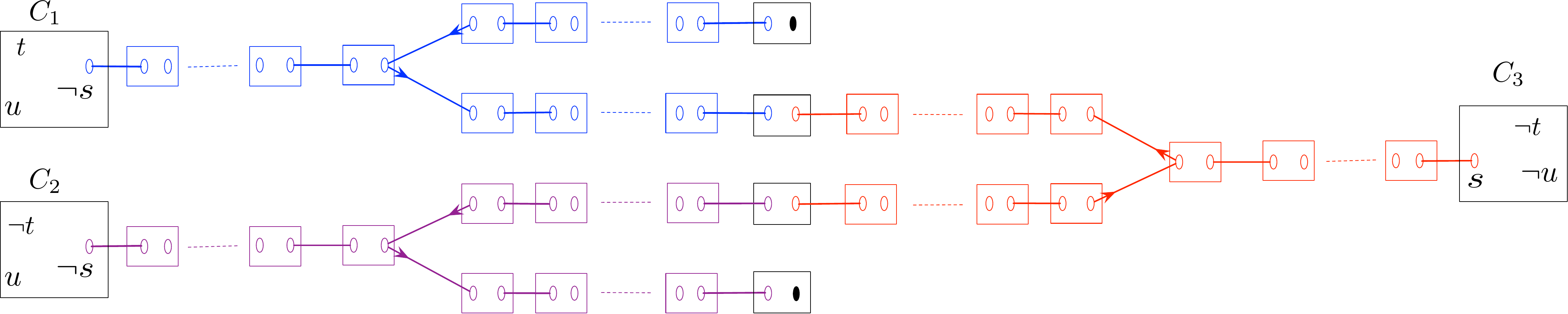}
  \Description{CoNP gadget illustration}
  \caption{Consider the SAT formula
    $(\neg s \lor t \lor u) \land (\neg s \lor \neg t \lor u) \land (s \lor
    \neg t \lor \neg u)$. Each clause has a corresponding block denoted by
    $C_1, C_2$ and $C_3$ respectively. Each such block has three facts
    corresponding to the literals of the clause. The figure illustrates the gadget
    for the variable $s$. Similar construction is also done for the variables
    $t$ and $u$. Note that because the \tripath is solution-nice, if a repair $r$ makes the query false and picks
    $\neg s$ from $C_1$ and/or $C_2$ then it cannot have $s$ from
    $C_3$. Conversely if $r$ contains $s$ from $C_3$ then it cannot have
    $\neg s$ from both $C_1$ and $C_2$. The variable-niceness of the \tripath
    provides the necessary variables to encode the clause and the literals.}
  \label{fig:lower-bound}
\end{figure*}

\paragraph{The database.} 
Let $l \in V_3$. By our assumption, $l$ (or $\neg l$) occurs once positively - let $C[l]$ be
this clause - and twice negatively - let $C_1[l]$, $C_2[l]$ be the two corresponding clauses.

Let $D[l]$ be the database consisting of the union of:\\
$\Theta_{l,C}=\Theta[\tup{C,l}_x,\tup{C,l}_y,\tup{C,l}_z,C,\tup{C,C_2,l},\tup{C,C_1,l}]$\\
$\Theta_{l,C_1}=\Theta[\tup{C_1,l}_x,\tup{C_1,l}_y,\tup{C_1,l}_z,C_1,\tup{C_1,C_1,l},\tup{C,C_1,l}]$ and\\
$\Theta_{l,C_2}=\Theta[\tup{C_2,l}_x,\tup{C_2,l}_y,\tup{C_2,l}_z,C_2,\tup{C,C_2,l},\tup{C_2,C_2,l}]$.

A few remarks about $D[l]$. The right leaf of $\Theta_{l,C}$ has the
same key as the right leaf of $\Theta_{l,C_1}$ while the left leaf of $\Theta_{l,C}$ has
the same key as the left leaf of $\Theta_{l,C_2}$. For the remaining blocks, the
union is a disjoint union (because they contain $x,y$ or $z$ in $\Theta$, hence the
element $\tup{C,l}_x,\tup{C,l}_y$ or $\tup{C,l}_z$ in $\Theta_{C,l}$ and so on). In particular all blocks have size two and
each fact make the query true with a fact in a adjacent block.

Let now $l \in V_2$. By our assumption, $l$ (or $\neg l$) occur once positively - let $C[l]$ be
this clause - and once negatively - let $C'[l]$ be the corresponding clause.

Let $D[l]$ be the database consisting of the union of\\
$\Theta_{l,C}=\Theta[\tup{C,l}_x,\tup{C,l}_y,\tup{C,l}_z,C,\tup{C,C,l},\tup{C,C',l}]$ and\\
$\Theta_{l,C'}=\Theta[\tup{C',l}_x,\tup{C',l}_y,\tup{C',l}_z,C',\tup{C',C',l},\tup{C,C',l}]$.

For the given $3$-SAT formula $\phi$, define the corresponding database
instance $D[\phi]$ as $\bigcup_{l\in\phi} D[l]$. Further, for every block $B$ in $D[\phi]$ if $B$ contains only one fact, then add a fresh fact in the block of $B$ that does not form a solution with any other facts of $D[\phi]$ (such a fact can always be defined for any block).

A few remarks about $D[\phi]$. Notice that by construction, every block of $D[\phi]$ has at least two facts. If $l$ and $l'$ occur in the same
clause $C$ then the roots of $\Theta_{l,C}$ and $\Theta_{l',C}$ have the same keys. We
call these heads the block of $C$ in the sequel. For
all other blocks, the union of the $D[l]$ is a disjoint union of blocks as they
all contain an element annotated with $l$ in their key.
Consider now a pair of fact $a,b$ such that $D[\phi] \models q\set{ab}$.
As the key of each element of $D[\phi]$ is annotated by either $C$ or $l$, $a$
and $b$ must belongs to the same $\Theta_{C,l}$ because $q$ is \twowaydet. Hence $a,b$ must be homomorphic copies of $a',b'$
in $\Theta$ such that $\Theta\models q(a'b')$. As $\Theta$ is solution-nice, $a',b'$ must be in
consecutive blocks in $\Theta$.

The following lemma concludes the proof of \Cref{theorem-conp hardness}.

\begin{lemma}
\label{lemma-conp proof}
Let $\phi$ be a 3-sat formula where every variable occurs at most three
times. $\phi$ is satisfiable iff $D[\phi] \not\models \certain(q)$.
\end{lemma}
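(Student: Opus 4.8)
The plan is to prove both directions of the biconditional by exploiting the gadget structure of $D[\phi]$ together with the fact that the underlying \tripath $\Theta$ is nice (in particular solution-nice and fork). The central idea is that a repair of $D[\phi]$ falsifies $q$ exactly when the choices it makes in the clause blocks correspond to a consistent truth assignment satisfying $\phi$. Concretely, I would first establish the following ``local falsification'' dictionary: for a single variable gadget $D[l]$, within any repair $r$, which facts must be avoided so that $r$ contains no solution of $q$. By \Cref{lemma-key inclusion consequence} and the remark that every pair $a,b$ with $D[\phi]\models q\set{ab}$ lives inside a single $\Theta_{C,l}$ in consecutive blocks, analysing falsification reduces to analysing each $\Theta_{C,l}$ in isolation, apart from the shared leaf blocks that glue the copies together.

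For the forward direction (satisfiable $\Rightarrow$ $D[\phi]\not\models\certain(q)$), I would take a satisfying assignment $\nu$ and construct an explicit repair $r_\nu$ that falsifies $q$. The construction chooses, in each clause block $C$, the fact coming from a literal that $\nu$ makes true in $C$ (such a literal exists since $\nu\models\phi$); for the internal and leaf blocks of each $\Theta_{C,l}$ the choice is forced by the solution-nice structure so as to break the chain $q(u_0,\dots)\dots q(\dots,u_i)$ of solutions inside that copy. The key point is that the two leaves of $\Theta_{l,C}$ are identified with leaves of $\Theta_{l,C_1}$ and $\Theta_{l,C_2}$ (the $V_3$ case) or of $\Theta_{l,C'}$ (the $V_2$ case), and one must check that the truth value assigned to $l$ makes these shared choices mutually compatible: since $\nu$ assigns a single Boolean value to $l$, the positive occurrence (in $C[l]$) and the negative occurrences cannot both be selected-as-satisfying, which is exactly what prevents a solution from arising across the glued leaf. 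I would verify using solution-niceness that $r_\nu$ introduces no unexpected solution (all solutions of $q$ in $D[\phi]$ are of the designated inter-block form), so $r_\nu\not\models q$.

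For the converse (not certain $\Rightarrow$ satisfiable), I would start from a repair $r$ with $r\not\models q$ and read off a truth assignment $\nu$. The definition of $\nu(l)$ is dictated by how $r$ chooses facts in the clause blocks/leaf blocks of the gadget $D[l]$: the fork structure forces that $r$ cannot simultaneously select the ``positive'' fact from $C[l]$ and the ``negative'' facts from both $C_1[l],C_2[l]$ (this is precisely the compatibility constraint highlighted in \Cref{fig:lower-bound}'s caption), so the selections are consistent with a single Boolean value of $l$, which I take as $\nu(l)$. I would then argue that in every clause $C$, because $r$ avoids the solution running through the root of some $\Theta_{l,C}$, at least one literal of $C$ is assigned true by $\nu$; hence $\nu\models\phi$. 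The bookkeeping linking ``$r$ avoids the solution in this copy'' to ``this literal is true'' is where solution-niceness and the fork (rather than triangle) property are essential, since a triangle would add the extra solution $q(fd)$ and destroy the clean correspondence.

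The main obstacle I expect is the gluing analysis at the shared leaf blocks: one must show that the identification of leaves across the copies $\Theta_{l,C},\Theta_{l,C_1},\Theta_{l,C_2}$ faithfully encodes the single-truth-value constraint on $l$, neither creating spurious cross-copy solutions of $q$ (which would break the forward direction) nor permitting inconsistent selections (which would break the converse). Verifying this rests on the precise key-annotation scheme --- each element's key is tagged by a clause symbol or by $l$, so that \twowaydeterminacy confines every solution to a single $\Theta_{C,l}$, and within it solution-niceness pins down exactly the admissible solutions. I would handle the $V_2$ and $V_3$ cases separately but in parallel, and treat the padding facts added to singleton blocks (which by construction join no solution) as inert, so they never interfere with either direction of the argument.
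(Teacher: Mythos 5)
Your overall plan is the intended one (there is essentially only one way to prove correctness of this reduction: build a falsifying repair from a satisfying assignment, and read a consistent literal selection off a falsifying repair, using the key annotations to confine solutions to single copies $\Theta_{l,C}$ and solution-niceness to pin down exactly which pairs are solutions). However, you have misattributed the one place where the \emph{fork} hypothesis is actually used, and precisely there your forward direction has a gap. The converse direction does not need forkness at all: the propagation argument (the root of a copy in $r$ forces all the $a(B)$-facts down that copy, hence the shared leaf blocks must take the \emph{other} copies' leaf facts, which forces the $b(B)$-facts up those copies and so excludes their roots) uses only the parent--child solutions $q\set{a(B)\,b(B')}$; an extra center solution $q(fd)$, as in a triangle, would only further constrain falsifying repairs, so any falsifying repair still yields a consistent selection. (Also, you only state the weaker constraint ``not the positive fact together with \emph{both} negative facts''; to define $\nu(l)$ consistently you need ``not the positive fact together with \emph{any} negative fact'', which is what the propagation actually gives.)

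Where forkness is genuinely needed is the forward direction, in a case your sketch does not address: a variable $l\in V_3$ with $\nu(l)$ false such that $\neg l$ is the selected true literal of \emph{both} clauses $C_1[l]$ and $C_2[l]$ (this situation can be unavoidable for some satisfiable $\phi$). Then the downward propagations in $\Theta_{l,C_1}$ and $\Theta_{l,C_2}$ force both shared leaf blocks to take the two leaf facts of the positive copy $\Theta_{l,C}$, and the two upward propagations inside $\Theta_{l,C}$ converge at its center: the repair is forced to contain both center-children facts $d$ and $f$ while avoiding $e$. This configuration is solution-free \emph{only} because $def$ is a fork, i.e.\ $q(fd)$ is not a solution; for a triangle-\tripath{} this very configuration satisfies $q$, the construction of $r_\nu$ breaks down, and indeed the lemma is false for triangle gadgets --- as it must be, given \Cref{thm-no fork triangle path}. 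Your phrase ``the choice is forced by the solution-nice structure so as to break the chain'' glosses over exactly this: at the center of the positive copy the two chains cannot both be broken by dropping a single fact, and one must instead verify that keeping both $d$ and $f$ creates no solution. Without this case analysis the forward direction is incomplete, and the hypothesis that the \tripath{} is a fork-\tripath{} is never actually used where it is needed.
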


\section{Queries that admit only triangle-\tripath}\label{sec-triangle tripath}
It remains to consider queries that admit at least one triangle-\tripath but no
fork-\tripath. This is for instance the case for the query
$q_6=R(\underline{x}~yz) \land R(\underline{z}~xy)$ since $q_6$ does not have a
fork-\tripath as all branching facts for $q_6$ form a triangle. However it is easy
to construct a triangle-\tripath for $q_6$.  A more challenging example is
{\small $q_7 = R(\underline{x_1x_2x_3~y_1y_1y_2y_3~z_1z_2z_3}~z_4z_4z_4z_4) \land
  R(\underline{x_3x_1x_2~y_3y_1y_1y_2~z_2z_3z_4}~z_1z_2z_3z_4)$}.
It is not immediate to construct a triangle-\tripath for $q_7$ and even less
immediate to show that $q_7$ admits no fork-\tripath. This is left as a useful
exercise to the reader.

We show
in this section that for such queries, certain answers can be computed in polynomial
time. However, the following result shows that the greedy fixpoint
algorithm of \Cref{sec-cert-algo} does not work for such queries.

\begin{theorem}\label{thm-triangle-tripath-lower-bound}
Let $q$ be a \twowaydet query admitting a triangle-\tripath. Then for all $k$,
$\certain(q) \neq \Cqk$. 
\end{theorem}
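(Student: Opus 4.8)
Since the algorithm $\Cqk$ is always sound (by the invariant maintained in \Cref{sec-cert-algo}, $D\models\Cqk$ implies $D\models\certain(q)$), we always have $\Cqk\subseteq\certain(q)$. Hence, to prove the separation for a fixed $k$, it suffices to exhibit a database $D_k$ with $D_k\models\certain(q)$ but $\emptyset\notin\Delta_k(q,D_k)$, i.e.\ a certain instance on which the greedy fixpoint never reaches the empty set. By \Cref{prop-nice-tripath} I would fix once and for all a \emph{nice} triangle-\tripath $\Theta$, with center $def$ satisfying $q(de)$, $q(ef)$ and the closing edge $q(fd)$. The plan is to use $\Theta$ as a gadget and glue $m=m(k)$ copies of it into a single \emph{cyclic} database $D_k$, with $m$ chosen much larger than $k$. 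The gluing identifies leaf/root blocks of consecutive copies sharing keys, exactly as in the \conp-hardness construction of \Cref{sec-fork tripath and conp}: the fresh key-elements $u,v,w$ guaranteed by niceness make all identifications consistent and keep distinct copies from interacting except at the intended identifications, while solution-niceness guarantees that the only solutions of $q$ present in $D_k$ are the intended adjacent-block ones together with the triangle edges $q(fd)$ of the centers. Already setting up this cyclic gluing so that every block has size two, solution-niceness holds globally, and the two-leaf branching of each copy together with the closure $q(fd)$ contributes a ``half-twist'' of the right parity, is part of the work.

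The first thing to prove is $D_k\models\certain(q)$, and here the closing edge $q(fd)$ is essential. I would argue globally on the cycle: assuming toward a contradiction that some repair $r$ makes $q$ false, $r$ must, inside each copy of $\Theta$, simultaneously avoid every adjacent-block solution and every triangle edge. Tracing the forced choices from one copy to the next — each center acting as a half-twist because of $q(fd)$ — shows that the induced alternation of choices cannot be closed up consistently once we return to the starting copy, since $m$ is chosen to make the accumulated twist nontrivial. This parity/closure obstruction contradicts the existence of $r$, so every repair of $D_k$ satisfies $q$.

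The core of the argument is to show $\emptyset\notin\Delta_k(q,D_k)$. I would attach to every fact of $D_k$ a position in the cyclic index set $\{0,\dots,m-1\}$ and prove, by induction on the inflationary construction of $\Delta_k$, a \emph{locality invariant}: every $k$-set $S$ that enters $\Delta_k$ ``covers'' a contiguous arc of positions whose length is bounded by $O(|S|)$, the hidden constant depending only on the fixed size of $\Theta$. The seed $k$-sets contain a solution, hence sit inside one or two adjacent copies and cover an arc of bounded length; and the block-elimination step can enlarge the covered arc only by a bounded amount per application, and can never bridge an uncovered gap, precisely because solution-niceness forbids any shortcut solution joining distant copies. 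Since a set forcing $q$ from the empty assignment would have to cover the whole cycle, it would need size at least linear in $m>k$; as $\emptyset$ covers nothing, it can be added only through a block all of whose facts are already singly forcing, and the invariant shows no such block exists for $|S|\le k$. Hence $\emptyset$ is never derived, giving $D_k\notin\Cqk$ and therefore $\certain(q)\neq\Cqk$.

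The main obstacle is exactly this locality invariant in the third step: one must pin down how the block-elimination rule of $\Cqk$ can combine and propagate $k$-sets across the glued structure, and rule out that two short arcs are ``stitched'' into a cycle-spanning one through the branching facts of the centers or through the key-equalities introduced by the gluing. This is where the full force of niceness is used — solution-niceness to kill unintended solutions that would otherwise act as long-range shortcuts, and the fresh root/leaf key-elements to isolate the copies. Making the invariant strong enough to survive the fixpoint induction yet loose enough to hold at each elimination step is the delicate point; by comparison the certainty argument is a comparatively routine tracing around the cycle once the gadget is in place.
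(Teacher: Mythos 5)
You take a genuinely different route from the paper, and it does not go through. The paper proves this theorem by a reduction to the fixed query $q_6=R(\underline{x}~yz) \land R(\underline{z}~xy)$, for which \cite{ICDTJournal} already establishes $\certain(q_6)\neq \cert_k(q_6)$ for all $k$; the triangle-\tripath{} serves as a gadget to transport the known hard instances of $q_6$, together with the failure of the fixpoint algorithm on them, to instances of $q$. You instead attempt to build the hard family from scratch, which would be legitimate in principle, but your construction already fails at its first step.

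The concrete gap is the certainty claim for your cyclic database $D_k$. A nice triangle-\tripath{}, viewed from its three external blocks (the root and the two leaves), imposes on any repair falsifying $q$ exactly the constraint ``at most one of the three port facts $u_0,u_1,u_2$ is selected'': selecting two of them propagates forced choices along the arms and, thanks to the center edges $q(de)$, $q(ef)$, $q(fd)$, produces a solution. Conversely, any selection of at most one port extends to a falsifying repair of the \tripath's interior. Since niceness rules out cross-copy solutions, a falsifying repair of a glued structure is precisely a choice of one port per merged block with at most one chosen port per copy --- that is, a matching of merged blocks into copies. If the gluing pattern is a single cycle of $m$ copies (which is what you describe; the third port of each copy sits in a padded singleton block and imposes nothing), such a matching always exists: orient the cycle and let each merged block select the port of its clockwise copy. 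Hence $D_k\not\models\certain(q)$ for every $m$, no matter how the ``half-twist'' is arranged. There is no parity obstruction to exploit, because the per-copy constraint is ``at most one'', not an exclusive-or; what is needed is a counting (Hall-type) obstruction, \ie{} a connected gluing pattern with strictly more merged blocks than copies --- a cycle plus a chord, or two cycles joined by a path --- which is exactly the shape of the hard instances for $q_6$. Separately, even after fixing the topology, your ``locality invariant'' showing $\emptyset\notin\Delta_k(q,D_k)$ is asserted rather than proved, and it constitutes the entire technical content of such a lower bound; the paper's reduction avoids re-proving it by inheriting it from the known result for $q_6$.
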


The proof is essentially a reduction to the query
$q_6$ for which it is shown in \cite{ICDTJournal}
that $\certain(q_6)$ can not be solved using $\cert_k(q_6)$, for all
$k$.

\subsection{Bipartite matching algorithm}\label{section-bipartite}

Since the algorithm $\Cqk$ does not work, we need a different polynomial time
algorithm to handle these queries. We use an algorithm based on bipartite
matching, slightly extending the one introduced in \cite{ICDTJournal} for self-join-free queries.

Note that since we only consider queries with two atoms, for every database
$D$, it is convenient to describe the set of solutions to a query $q$ as a
graph. 
We define the solution graph of $D$, denoted by $G(D,q)$ to be an undirected
graph whose vertices are the facts of $D$ and there is an edge between two
facts $a$ and $b$ in $G(D,q)$ iff $D \models q\set{ab}$.  A connected component
$C$ of $G(D,q)$ is called a {\bf quasi-clique} if for all facts
$a,b\in C$ such that $a\not\sim b$, $\set{a,b}$ is an edge in $G(D,q)$.

For an arbitrary database $D$, and each fact $a$ of $D$, $\clique(a)$ is
defined as follows : if $C$ is the connected component of $G(D,q)$ containing
$a$ and $C$ is a quasi-clique $\clique(a) = C$, otherwise
$\clique(a) =\set{a}$.

On input $D$, \matching first computes $G(D,q)$ and its connected components,
and then creates a bipartite graph $H(D,q) = (V_1 \cup V_2, E)$, where $V_1$ is
the set of blocks of $D$ and $V_2 =\set{clique(a)~|~a \in D}$. Further
$(v_1,v_2) \in E$ iff the block $v_1$ contains a fact $a$ which is in $v_2$ and
such that $D\not\models q(a,a)$.
Note that constructing
$G(D,q)$ and $H(D,q)$ can be achieved in polynomial time.  Finally the
algorithm outputs `yes' iff there is a bipartite matching of $H(D,q)$ that
saturates $V_1$. In this case we write $D \models \matching$.

This can be checked in
\ptime\cite{DBLP:journals/siamcomp/HopcroftK73}. 
We now show that $\neg \matching$ 
is always an under-approximation of $\certain(q)$.

 \begin{proposition}\label{prop-matching-under-approx}
   Let $q$ be a \twowaydet query and $D$ be a database. Then
   $D \models \neg\matching$ implies  $D\models \certain(q)$. 
 \end{proposition}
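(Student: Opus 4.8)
The plan is to prove the contrapositive: assuming $D \not\models \certain(q)$, I will construct a matching of $H(D,q)$ saturating $V_1$, which by definition means $D \models \matching$. By hypothesis there is a repair $r$ of $D$ with $r \not\models q$. Since a repair selects exactly one fact from every block, I write $a_B$ for the unique fact of $r$ lying in the block $B$; this gives a bijection between $V_1$ (the blocks of $D$) and the facts of $r$. The candidate matching $M$ simply sends each block $B$ to the vertex $\clique(a_B) \in V_2$.

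First I would check that each pair $(B, \clique(a_B))$ is indeed an edge of $H(D,q)$. By definition $\clique(a_B)$ is either $\set{a_B}$ or the connected component of $a_B$ in $G(D,q)$, so in all cases $a_B \in \clique(a_B)$; and $a_B \in B$. It remains to see that $D \not\models q(a_B,a_B)$: otherwise the single fact $a_B \in r$ would already witness $r \models q$, contradicting $r \not\models q$. Hence $B$ contains a fact of $\clique(a_B)$ admitting no self-solution, so $(B,\clique(a_B)) \in E$.

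The main point, and the step I expect to carry the weight of the argument, is to show that $M$ is injective, \ie\ that $B \mapsto \clique(a_B)$ does not collide on distinct blocks; this is where the hypothesis $r \not\models q$ is used again, through the quasi-clique condition. Suppose $\clique(a_B) = \clique(a_{B'})$ with $B \neq B'$. Distinct blocks are not key-equal, so $a_B \not\sim a_{B'}$; in particular $a_B \neq a_{B'}$, which already rules out the case where the common value is a singleton $\set{a_B}$. In the remaining case the common value is a connected component $C$ of $G(D,q)$ that is a quasi-clique and contains both $a_B$ and $a_{B'}$. Since $a_B \not\sim a_{B'}$, the quasi-clique property forces $\set{a_B,a_{B'}}$ to be an edge of $G(D,q)$, \ie\ $D \models q\set{a_B a_{B'}}$; as both facts lie in $r$, this yields $r \models q$, a contradiction. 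Therefore $M$ is injective, hence a matching saturating $V_1$, and $D \models \matching$, completing the contrapositive.

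I do not expect a serious obstacle beyond the injectivity analysis above: the argument is essentially graph-theoretic and uses only the definitions of block, solution graph and quasi-clique — it does not, as far as I can see, even invoke \twowaydet-ness, which suggests the statement holds for arbitrary two-atom $q$. The one part requiring genuine care is disposing separately of the singleton-$\clique$ case and the quasi-clique case; in both the decisive observation is that two distinct blocks contribute key-distinct facts of $r$, so that their cliques coinciding would force a within-$r$ solution and contradict $r \not\models q$.
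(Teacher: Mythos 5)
Your proof is correct and is essentially the intended argument (stated contrapositively): from a repair $r$ falsifying $q$, the assignment $B \mapsto \clique(a_B)$ is defined on every block since a repair picks one fact per block, each pair $(B,\clique(a_B))$ is an edge of $H(D,q)$ because a fact of a falsifying repair cannot be a self-solution, and injectivity follows from the quasi-clique condition together with the fact that distinct blocks contribute key-distinct facts, so the resulting matching saturates $V_1$. Your side remark is also accurate: this soundness direction never uses \twowaydetcy (it holds for every two-atom query); that hypothesis matters only for the converse, completeness-type statements such as \Cref{thm-triangle-query-ptime}.
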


 A database $D$ is called a {\bf clique-database} for $q$ if every
 connected component $C$ of the graph $G(D,q)$ is a quasi-clique.
 As soon as the input database is a clique-database for $q$, $\neg\matching$ correctly computes $\certain(q)$ .
 
 \begin{proposition}\label{thm-triangle-query-ptime}
   Let $q$ be a \twowaydet query and $D$ be a clique-database for $q$. Then
   $D \models \neg\matching$ iff $D\models \certain(q)$. Therefore checking whether $D\models \certain(q)$ is in \ptime.
 \end{proposition}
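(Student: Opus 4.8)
The plan is to prove the two directions of the biconditional separately. The left-to-right direction, $D \models \neg\matching \Rightarrow D \models \certain(q)$, is exactly \Cref{prop-matching-under-approx}, which already holds for every \twowaydet query and every database; so that direction needs no further argument and the clique-database hypothesis will be used only for the converse. For the converse I would reason by contraposition and show that $D \models \matching$ implies $D \not\models \certain(q)$, by reading off from a saturating matching an explicit repair that falsifies $q$.

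First I would recast falsifiability in terms of the solution graph: a repair $r$ of $D$ satisfies $r \not\models q$ if and only if $r$ contains no fact $a$ with $D\models q(a,a)$ and no two distinct facts $a,b\in r$ with $D\models q\set{ab}$; equivalently, $r$ is an independent set of $G(D,q)$ that avoids every \emph{self-loop} fact (a fact $a$ with $D\models q(a,a)$). This is immediate, since the solutions of $q$ inside $r$ are exactly the solutions of $q$ in $D$ that use only facts of $r$.

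Next, let $M$ be a matching of $H(D,q)$ saturating $V_1$. For each block $v_1$, let $C=M(v_1)\in V_2$ be the clique it is matched to; by the definition of the edges of $H(D,q)$ there is a fact $a_{v_1}\in v_1$ with $a_{v_1}\in C$ and $D\not\models q(a_{v_1},a_{v_1})$. I would then set $r=\set{a_{v_1}\mid v_1\in V_1}$. Since each block contributes exactly one such fact, $r$ selects exactly one fact per block and is therefore a repair of $D$, and by construction $r$ contains no self-loop fact.

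The main obstacle is to verify that $r$ is an independent set of $G(D,q)$, and this is where the clique-database hypothesis is essential. Suppose toward a contradiction that two distinct facts $a_{v_1},a_{v_2}$ of $r$, coming from distinct blocks $v_1\neq v_2$, satisfy $D\models q\set{a_{v_1}a_{v_2}}$. Then they are adjacent in $G(D,q)$ and hence lie in a common connected component. Because $D$ is a clique-database this component is a quasi-clique, so $\clique(a_{v_1})=\clique(a_{v_2})$ and both equal that component. On the other hand $a_{v_i}\in M(v_i)\in V_2$, and in a clique-database every element of $V_2$ is a connected component of $G(D,q)$; hence $M(v_i)$ is precisely the component containing $a_{v_i}$, that is $M(v_i)=\clique(a_{v_i})$. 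Combining, $M(v_1)=\clique(a_{v_1})=\clique(a_{v_2})=M(v_2)$, contradicting the injectivity of the matching $M$ on $v_1\neq v_2$. Thus $r$ is an independent set avoiding self-loops, so $r\not\models q$ and $D\not\models\certain(q)$. Finally, the \ptime claim follows because $G(D,q)$, the bipartite graph $H(D,q)$, and a maximum matching of $H(D,q)$ can all be computed in polynomial time (the last via \cite{DBLP:journals/siamcomp/HopcroftK73}), so by the equivalence just established $\neg\matching$ decides $\certain(q)$ in polynomial time.
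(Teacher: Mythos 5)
Your proof is correct and takes essentially the same route the paper sets up for this proposition: the forward direction is delegated to \Cref{prop-matching-under-approx}, and for the converse you read off from a matching saturating $V_1$ a repair containing one self-loop-free fact per block, whose independence in $G(D,q)$ follows from injectivity of the matching combined with the key observation that in a clique-database every element of $V_2$ is a full connected component, so two adjacent chosen facts would force two distinct blocks to be matched to the same vertex of $V_2$. I see no gaps.
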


This already gives the complexity of $\certain(q)$ for some queries that do not admit fork-\tripath (but possibly admit triangle-\tripath). A query $q$ is said to be a {\bf clique-query} if
every database $D$ is a clique-database for $q$.  For instance, the query
$q_6$ is a clique-query as the
solution graph of any database is a clique-database. From \cref{thm-triangle-query-ptime} the following theorem follows.

\begin{theorem}\label{cor-clique-query is ptime}
Let $q$ be a \twowaydet query. If $q$ is a  clique-query then $\certain(q)=\neg\matching$, and thus $\certain(q)$ is in \ptime.
\end{theorem}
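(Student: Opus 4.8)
The plan is to derive \Cref{cor-clique-query is ptime} as a direct corollary of \Cref{thm-triangle-query-ptime} together with the definition of a clique-query. The core observation is that a clique-query is, by definition, a query for which \emph{every} database $D$ is a clique-database. Hence the hypothesis of \Cref{thm-triangle-query-ptime}---that $D$ be a clique-database for $q$---is automatically satisfied for every input database, and we may apply that proposition unconditionally.

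Concretely, I would argue as follows. Let $q$ be a \twowaydet query that is a clique-query, and let $D$ be an arbitrary database. Since $q$ is a clique-query, $D$ is a clique-database for $q$. Therefore \Cref{thm-triangle-query-ptime} applies and yields $D \models \neg\matching$ iff $D \models \certain(q)$. As $D$ was arbitrary, this establishes the equality $\certain(q) = \neg\matching$ as decision problems, \ie~the set of databases on which $q$ is certain coincides exactly with the set of databases on which \matching answers `no'.

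For the complexity consequence, I would invoke the fact---already noted in \Cref{section-bipartite}---that \matching runs in polynomial time: computing $G(D,q)$ and $H(D,q)$ is polynomial, and testing for a bipartite matching saturating $V_1$ is polynomial by \cite{DBLP:journals/siamcomp/HopcroftK73}. Since $\certain(q)$ is exactly the complement $\neg\matching$, and the complement of a \ptime~problem is in \ptime, it follows that $\certain(q)$ is in \ptime.

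I do not anticipate any real obstacle here, as the statement is essentially a specialization of \Cref{thm-triangle-query-ptime} from a single clique-database to the uniform case where \emph{all} databases are clique-databases. The only point requiring minor care is to confirm that the characterization $D \models \neg\matching$ iff $D \models \certain(q)$ holds for \emph{every} $D$ (not merely for some fixed clique-database), which is immediate from the clique-query hypothesis; the rest is a routine passage from a pointwise equivalence to an equality of problems and the standard closure of \ptime~under complementation.
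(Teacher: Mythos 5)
Your proposal is correct and matches the paper's own derivation: the paper likewise obtains \Cref{cor-clique-query is ptime} as an immediate consequence of \Cref{thm-triangle-query-ptime}, noting that a clique-query makes every input database a clique-database so that the pointwise equivalence $D \models \neg\matching$ iff $D \models \certain(q)$ holds universally. The complexity claim follows exactly as you say, from the polynomial-time bound on \matching and closure of \ptime under complement.
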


\subsection{Combining matching-based and
  greedy fixpoint algorithms}\label{sec-combine-mathing-and-cqk}

In this section we prove that certain answers to a \twowaydet query $q$ which does not admit a fork-\tripath are computed by a combination of the polynomial time algorithms $\matching$ and $\Cqk$, thus completing the dichotomy classification (Recall that $\kappa = l^l$ where $l$ is the number of key positions).

\begin{theorem}
\label{thm-no fork triangle path}
Let $q$ be a query that is \twowaydet. If $q$ does not admit a fork-\tripath
then $\certain(q) = \Cqk \vee \neg\matching$, for $k=2^{2\kappa+1}+\kappa - 1$.
Thus $\certain(q)$ is in polynomial time.
\end{theorem}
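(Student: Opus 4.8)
The plan is to prove $\certain(q) = \Cqk \vee \neg\matching$ for \twowaydet queries with no fork-\tripath, by establishing the two inclusions separately. The easy direction is $\Cqk \vee \neg\matching \subseteq \certain(q)$: both algorithms are under-approximations of certainty. Indeed, $\Cqk$ is always an under-approximation (as stated in \Cref{sec-cert-algo}), and $\neg\matching$ is an under-approximation by \Cref{prop-matching-under-approx}. So if either algorithm answers \emph{yes}, then $D \models \certain(q)$. The entire content of the theorem is therefore the reverse inclusion $\certain(q) \subseteq \Cqk \vee \neg\matching$: I must show that whenever $D \models \certain(q)$ but $D \not\models \Cqk$, then necessarily $D \models \neg\matching$.

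The strategy I would take is to reduce the general case to the clique-database case already handled by \Cref{thm-triangle-query-ptime}. Assume $D \models \certain(q)$ and $D \not\models \neg\matching$ (so some matching saturating $V_1$ exists); I want to conclude $D \models \Cqk$. First I would examine the solution graph $G(D,q)$: by \twowaydeterminacy and \Cref{lemma-key inclusion consequence}, each fact participates in at most two solutions within a repair, so the connected components have a restricted structure. The key distinction is between components that are quasi-cliques and those that are not. Since $q$ admits no fork-\tripath, any branching configuration $def$ in $D$ must be a triangle (not a fork); I would leverage this to argue that the non-quasi-clique components are precisely those where the greedy fixpoint algorithm $\Cqk$ can make progress. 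Concretely, I expect to split $D$'s components into the ``clique part'' — on which \Cref{thm-triangle-query-ptime} tells us $\neg\matching$ already decides certainty correctly — and a ``path-like part'' on which the absence of fork-\tripath (combined with the argument style of \Cref{lem-k-minimal} and \Cref{lemma-k minimal or tripath}) lets $\Cqk$ with the stated bound $k = 2^{2\kappa+1}+\kappa-1$ certify the query.

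The main technical step, and the principal obstacle, is to handle the interaction between these two parts within a single repair-minimization argument. I would pick a repair $r$ minimizing the number of solutions $|q(r)|$, as in the proofs of \Cref{thm-hp-X+} and \Cref{lemma-k minimal or tripath}. The goal is to show $\emptyset \in \Delta_k(q,D)$ or, failing that, to exhibit the saturating matching that forces $D \not\models \neg\matching$, contradicting our assumption. For each solution $r \models q\set{ab}$, I would apply the now-familiar dichotomy: either a suitable $k$-set $K \supseteq \rkey(a,r)$ lies in $\Delta_k(q,D)$, or one can construct a repair with strictly fewer solutions. The delicate point is that \Cref{lem-k-minimal} assumed \emph{no \tripath at all}, whereas here triangle-\tripaths are permitted; so I must redo that construction allowing triangle configurations, showing that when the fixpoint construction fails to certify, the failure is confined to quasi-clique components. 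On those components the existence of a saturating matching in $H(D,q)$ must be ruled out because $D \models \certain(q)$ — this is exactly where \Cref{thm-triangle-query-ptime} is invoked. The bookkeeping that keeps the memory set $K$ bounded by $k$ while only triangle (never fork) configurations arise is where I expect the real work to lie; the fork-absence hypothesis is what prevents the ``second flag'' scenario of \Cref{lem-k-minimal} from producing a genuine \tripath, and instead channels every obstruction into the matching-decidable clique structure.
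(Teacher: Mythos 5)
You have the right skeleton, and it is the paper's skeleton: both algorithms are under-approximations (giving the easy inclusion), and the hard inclusion is obtained by splitting $D$ into $q$-connected components, handling the quasi-clique components with $\neg\matching$ via \Cref{thm-triangle-query-ptime} and the other components with $\Cqk$ via \Cref{lemma-k minimal or tripath}. But your proposal leaves unproved precisely the claim that carries the theorem: that when $q$ admits no fork-\tripath, every $q$-connected component that is \emph{not} a clique-database contains \emph{no \tripath at all} (item~(\ref{item-connected-property}) of \Cref{prop-component-partition}). You assert this as an expectation (``the failure is confined to quasi-clique components'', ``fork-absence \dots channels every obstruction into the matching-decidable clique structure''), but give no argument. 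The paper proves it via a specific structural lemma --- if a single $q$-connected component contains both a triangle-\tripath and a fork, then $q$ itself admits a fork-\tripath --- and explicitly identifies this as its main technical contribution. Deferring it to ``where I expect the real work to lie'' is naming the gap, not filling it.

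Moreover, the route you propose for filling it is off-track. Redoing the construction of \Cref{lem-k-minimal} inside one global repair-minimization ``allowing triangle configurations'' cannot succeed as stated: \Cref{thm-triangle-tripath-lower-bound} shows that once triangle-\tripath configurations are present the dichotomy of that lemma genuinely fails for every $k$ (there are certain databases on which $\Cqk$ answers no), so no bookkeeping modification of that induction recovers it database-wide; the structural separation into components must come \emph{first}, after which \Cref{lem-k-minimal} and \Cref{lemma-k minimal or tripath} apply unchanged to the \tripath-free components. Your argument also silently uses three properties of the decomposition that need (easy but necessary) proofs: certainty of $D$ localizes to certainty of \emph{some} component, a saturating matching for $D$ induces one on every component, and $C_i \models \Cqk$ for some $i$ implies $D \models \Cqk$ --- items~(\ref{item-certain-partition})--(\ref{item-matching-partition}) of \Cref{prop-component-partition}. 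Finally, note the logical order: since certainty holds only on \emph{some} component $C_j$, the correct move (the paper's) is to argue that $C_j$ cannot be a clique-database when $D\models\matching$ (else \Cref{thm-triangle-query-ptime} would force $C_j\models\neg\matching$, contradicting the restriction of the matching to $C_j$), hence $C_j$ is \tripath-free and $\Cqk$ certifies it; your phrasing --- ruling out saturating matchings on the quasi-clique components ``because $D \models \certain(q)$'' --- presumes certainty on those components, which is not what the hypothesis gives you.
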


The key to prove this theorem is the following proposition
which proves that the database can be partitioned into components, such that on each component at least one of the two polynomial time algorithms is correct.

\begin{proposition}
  \label{prop-component-partition} Let $q$ be a \twowaydet query that does not admit a fork-\tripath 
  and let $D$ be a database.  There exists a partition $C_1,  C_2, \dots C_n$ of $D$ having all of the following properties :
  \begin{enumerate}
\item for all $i$, $C_i$ does not contain a \tripath or $C_i$ is a clique-database for $q$.\label{item-connected-property}
  \item $D\models \certain(q)$ iff there exists some $i$ such that $C_i\models \certain(q)$.\label{item-certain-partition}
  \item For all $k$, if $C_i\models \Cqk$ for some $i$, then $D\models \Cqk$.\label{item-cqk-partition}
  \item If $D\models \matching$ then for all $i$  $C_i\models \matching$. \label{item-matching-partition}
  \end{enumerate}
\end{proposition}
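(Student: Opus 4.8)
The plan is to take as the parts $C_1,\dots,C_n$ the classes of the finest equivalence relation on the facts of $D$ that identifies two facts whenever they form a solution (that is, are adjacent in $G(D,q)$) and whenever they are key-equal. Each $C_i$ is then a union of blocks of $D$ (it is \emph{block-closed}) and no solution of $q$ has its two facts in two different parts (it is \emph{solution-closed}). These two closure properties are exactly what I would use to obtain the last three items, and I would first observe that absorbing isolated facts and shared blocks in this way is harmless: a fact carrying no solution is an isolated vertex of $G(D,q)$ and changes neither the connected components nor their quasi-clique status.

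For item~\ref{item-certain-partition} I would use block-closure to restrict any repair $r$ of $D$ to a repair of each $C_i$, and conversely to glue a family of repairs $\rho_i$ of the $C_i$ into a repair $\bigcup_i\rho_i$ of $D$; solution-closure then ensures that a repair of $D$ satisfies $q$ exactly when its restriction to some part does. Hence if some part is certain every repair of $D$ already satisfies $q$ on it, and if no part is certain I can glue one false repair per part into a false repair of $D$. Item~\ref{item-cqk-partition} I would get from the fact that the blocks and solutions of $C_i$ are precisely those of $D$ lying inside $C_i$, so any successful run of the fixpoint on $C_i$ replays verbatim on $D$; formally $\Delta_k(q,C_i)\subseteq\Delta_k(q,D)$, whence $\emptyset\in\Delta_k(q,C_i)$ implies $\emptyset\in\Delta_k(q,D)$. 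For item~\ref{item-matching-partition} I would note that every block and every set $\clique(a)$ lies inside a single part, so $H(D,q)$ is the disjoint union of the graphs $H(C_i,q)$; a matching of $H(D,q)$ saturating $V_1$ then restricts to a matching of each $H(C_i,q)$ saturating its own left side.

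The real work is item~\ref{item-connected-property}. Since $q$ admits no fork-\tripath, any \tripath inside a part is a triangle-\tripath and, being solution-connected, lives inside a single connected component of $G(D,q)$. The key structural claim I would prove is that such a component must be a quasi-clique: if it contained two non-key-equal facts with no solution between them, then, linking this ``non-edge'' to the triangle center $def$ through connectivity and normalising by \Cref{prop-nice-tripath}, I could assemble a fork-\tripath, contradicting the hypothesis. I expect the main obstacle to be the interaction with the block-closure performed above, namely ensuring that a component carrying a triangle-\tripath is never merged, through a shared block, with a component that fails to be a quasi-clique; equivalently, that every block meeting a triangle-\tripath component is internal to it. Settling this is where the fine bookkeeping between the key-inclusion data recorded by $\bg(e)$ (via \Cref{lemma-key inclusion consequence}) and the leaf/root conditions $\g(e)\not\subseteq\key(u_i)$ of a \tripath will be needed. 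Once it is in place, every part is either a union of quasi-clique components, hence a clique-database, or contains no \tripath, and items \ref{item-connected-property}--\ref{item-matching-partition} combine to yield the proposition.
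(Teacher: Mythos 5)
Your partition coincides with the paper's (its ``$q$-connected'' components: the classes of the reflexive--symmetric--transitive closure of ``forms a solution or is key-equal''), and your arguments for items~(\ref{item-certain-partition}), (\ref{item-cqk-partition}) and (\ref{item-matching-partition}) are correct --- the paper treats these as the easy part. The gap is in item~(\ref{item-connected-property}), exactly where the paper locates its main technical contribution. Your argument hinges on the assertion that a \tripath, ``being solution-connected, lives inside a single connected component of $G(D,q)$''. That is false: a \tripath is connected only at the level of blocks, not of the solution graph. In a solution-nice \tripath the two facts of an internal block are key-equal and form no solution with each other, so $G(D,q)$ restricted to the \tripath splits into the center $\set{d,e,f}$ plus a collection of pairwise disjoint single edges; in particular the branching block $B$ itself straddles two components, since $e=a(B)$ lies in the center's component while $b(B)$ forms its only solution with $a(s(B))$. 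Consequently your ``key structural claim'' (that the $G$-component of the triangle-\tripath is a quasi-clique), even if proved, does not give item~(\ref{item-connected-property}): being a clique-database requires \emph{every} component of $G(C_i,q)$ to be a quasi-clique, and the offending non-edge --- equivalently, a fork --- may lie in a component of $C_i$ disjoint from the center, attached to the \tripath only through a chain of shared blocks.

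You do notice this obstacle, but the remedy you propose, ``every block meeting a triangle-\tripath component is internal to it'', is not an equivalent reformulation and is refuted by the \tripath itself: again, the branching block meets the center's component through $e$, yet its second fact $b(B)$ lies outside that component, so blocks straddling $G$-components are unavoidable and no bookkeeping with $\bg(e)$ and the conditions $\g(e)\not\subseteq\key(u_i)$ can rule them out. What has to be proved instead is the genuinely cross-component statement that the paper proves: if a single $q$-connected component contains both a triangle-\tripath and a fork --- possibly in different components of $G(D,q)$, linked only through blocks --- then $q$ admits a fork-\tripath. Constructing that fork-\tripath by working along the block chain (using \Cref{lemma-key inclusion consequence} and \Cref{prop-nice-tripath}) is the hard core of the proposition; your proposal covers at best the degenerate case where the fork shares a $G$-component with the center, and offers no viable route for the general one.
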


\begin{proof}[Proof sketch]
  The partition of the database is obtained using the following equivalence
  relation. Two blocks $B, B'$ of a database $D$ are said to be {\bf
    $q$-connected} if $(B, B')$ belongs to the reflexive symmetric transitive
  closure of
  $\set{(B_1,B_2)~|~ \exists a \in B_1, b\in B_2 \textrm{ such that }D\models
    q\set{ab}}$.  The main difficulty is to show that each $q$-connected
  component satisfies $(1)$ (the remaining properties are easy to show). This
  is proved by showing that if a $q$-connected component contains both a
  triangle-\tripath and a fork, then $q$ also admits a fork-\tripath. This is
  the main technical contribution.  Altogether, if $q$ does not admit a
  fork-tripath then each $q$-connected component is either a clique-database or
  contains no \tripath at all.
\end{proof}

\begin{proof}[Proof of \Cref{thm-no fork triangle path}]
We  assume \Cref{prop-component-partition} and prove the theorem. For an input database $D$, if $D \not\models\certain(q)$ then, by~\Cref{prop-matching-under-approx}, $D \models \matching$; moreover $D \not\models \Cqk$, as $\Cqk$ too is always an under-approximation of $\certain(q)$.

Assume now $D \models\certain(q)$ and $D \models \matching$, we show that $D\models \Cqk$.
Consider the partition $C_1, \dots C_n$ of $D$ given by~\Cref{prop-component-partition}.
Since $D \models\certain(q)$ there exists a $C_j$ such that $C_j \models \certain(q)$; moreover $C_i\models \matching$ for all $i$. In particular $C_j \models \matching$ and therefore on $C_j$ the $\neg\matching$ algorithm does not compute certain answers. Then by~\Cref{thm-triangle-query-ptime}, $C_j$ is not a clique-database for $q$.
Now by~\Cref{prop-component-partition}, $C_j$ admits no \tripath, and therefore by \Cref{lemma-k minimal or tripath}, $C_j \models \Cqk$.
Then, again by~\Cref{prop-component-partition}, $D \models \Cqk$.
\end{proof}


\section{Conclusion}\label{sec-conclusion}

We have proved the dichotomy conjecture on consistent query answering for queries with two atoms.
The conditions we provided for separating the polynomial time case from the \conp-hard case can be shown to be decidable. Indeed one can show that if a fork-\tripath exists
then there exists one of exponential size. However, it is likely that there are more efficient decision procedures than testing the existence
of a \tripath.

We also obtained a (decidable) characterization of the two-atom queries whose certain answers are computable using the greedy fixpoint algorithm of
\Cref{sec-cert-algo}. The current characterization assumes $\ptime \neq \conp$
but we believe that if a query $q$ admits a fork-tripath then
$\certain(q_6)$ reduces to $\certain(q)$ as in the triangle-\tripath case,
hence proving the characterization without any complexity assumption. We postpone
this for the journal version of this paper.

The dichotomy conjecture for all conjunctive queries remains a
challenging problem. 
A new challenge that this paper poses is that of characterizing all conjunctive queries whose certain answers are computable by the greedy fixpoint algorithm of \Cref{sec-cert-algo}.
We believe this is a worthwhile question, given the simplicity
of this algorithm.

Along the lines of \cite{ICDTJournal} we can also attempt to characterize the queries solvable in \fo. We conjecture that this class coincides with those queries that can be solved using the fixpoint algorithm where the fixpoint terminates after a bounded number of computations, irrespective of the size of the database.


\begin{acks}
 This work supported is by ANR QUID, grant ANR-18-CE40-0031. Padmanabha worked on part of this project during his affiliation with VALDA, ENS, Paris where he was supported by ANR-19-P3IA-0001 (PRAIRIE 3IA Institute).
\end{acks}

\bibliographystyle{ACM-Reference-Format}
\bibliography{biblio-repair}

\appendix

\end{document}